\newcolumntype{L}[1]{>{\raggedright\arraybackslash}p{#1}}
\newtheorem{theorem}{Theorem}[section]
\newtheorem{lemma}[theorem]{Lemma}
\newtheorem{corollary}[theorem]{Corollary}
\theoremstyle{definition}
\newtheorem{innerdefinition}{Definition}[section]
\newmdenv[
skipabove=\baselineskip,
skipbelow=\baselineskip,
linewidth=0.6pt,
linecolor=black,
backgroundcolor=gray!5, 
roundcorner=2pt
]{defbox}
\newcommand{\EqV}{\mathsf{eq\_vertex}}
\newcommand{\EqHs}{\mathsf{eq\_hs}}
\newcommand{\SameName}{\texttt{SameName}}
\newcommand{\nar}{\textit{n}-ary}
\newcommand{\medtilde}[1]{\stackrel{\sim}{#1}}
\title{Hypernetwork Theory: The Structural Kernel}
\author{Richard D. Charlesworth\thanks{Visiting Research Fellow, School of Engineering and Innovation, The Open University, UK}}
\date{\today}
\begin{document}
	\maketitle

	\begin{abstract}
		Modelling across engineering, systems science, and formal methods remains limited by binary relations, implicit semantics, and diagram-centred notations that obscure multilevel structure and hinder mechanisation. Hypernetwork Theory (HT) addresses these gaps by treating the $n$-ary relation as the primary modelling construct. Each relation is realised as a typed hypersimplex---$\alpha$ (conjunctive, part–whole) or $\beta$ (disjunctive, taxonomic)---bound to a relation symbol $R$ that fixes arity and ordered roles. Semantics are embedded directly in the construct, enabling hypernetworks to represent hierarchical and heterarchical systems without reconstruction or tool-specific interpretation.
		
		This paper presents the structural kernel of HT. It motivates typed $n$-ary relational modelling, formalises the notation and axioms (A1–A5) for vertices, simplices, hypersimplices, boundaries, and ordering, and develops a complete algebra of structural composition. Five operators---merge, meet, difference, prune, and split---are defined by deterministic conditions and decision tables that ensure semantics-preserving behaviour and reconcile the Open World Assumption with closure under rules. Their deterministic algorithms show that HT supports reproducible and mechanisable model construction, comparison, decomposition, and restructuring.
		
		The resulting framework elevates hypernetworks from symbolic collections to structured, executable system models, providing a rigorous and extensible foundation for mechanisable multilevel modelling.
	\end{abstract}
	
	\paragraph{Keywords:} Hypernetwork Theory; structural composition; algebra of operators; merge / meet / difference / prune / split; mechanisation; semantics-preserving modelling; algorithmic closure.

\section{Introduction}
\label{sec:introduction}

Across systems engineering, computing, and formal methods, modelling practice remains dominated by diagrams. In notations such as \textit{Unified Modelling Language} (UML) and \textit{Systems Modelling Language} (SysML)~\cite{delligatti2013sysml,eriksson2003uml}, the diagram is often mistaken for the model itself. Diagrams communicate structure, but the model is a structured, semantically grounded specification that can be rendered in many forms. When modelling is reduced to diagramming, semantics remain implicit, precision weakens, and mechanisation becomes unreliable. Hypernetwork Theory (HT) adopts a \emph{model-first} stance: diagrams are views; the model is the underlying semantics.

HT specialises this stance by treating the $n$-ary relation as fundamental. Each relation is realised as a typed hypersimplex---$\alpha$ (conjunctive, part--whole) or $\beta$ (disjunctive, taxonomic)---bound to a relation symbol $R$ that fixes arity and ordered roles. This embeds semantics directly in the structure, avoiding the untyped nature of graphs and hypergraphs~\cite{RN3,RN363,RN777} and the triple-based reconstruction required in ontology frameworks such as \textit{Resource Description Framework} (RDF) and the \textit{Web Ontology Language} (OWL)~\cite{RN688,brickley2004rdf}. Typed $n$-ary relations address long-standing challenges in systems modelling where part--whole, taxonomic, and associative semantics must coexist without collapse.

HT reconciles extensibility with semantic rigour. Under the Open World Assumption~\cite{Keet2013a}, hypernetworks can be freely extended; at the same time, every construct must satisfy axioms governing identity, typing, relation binding, and boundary scoping. Openness applies to scope, while closure applies to semantics. This contrasts with \textit{Meta-Object Facility} (MOF)-based languages~\cite{siegel2014object} and \textit{Model Driven Engineering} (MDE)/\textit{Model Driven Architecture} (MDA) approaches~\cite{RN709,gavsevic2009model,omg2014mda}, which assume a Closed World stance and rely on tool conventions for meaning.

Most established approaches enforce hierarchy: UML and SysML rely on containment~\cite{delligatti2013sysml,eriksson2003uml}; MOF prescribes a fixed four-layer stack~\cite{siegel2014object}; and ontologies impose taxonomic stratification~\cite{brickley2004rdf,RN688}. These assumptions break down when systems exhibit heterarchy, cross-level relations, or overlapping subsystems. HT instead treats hypersimplices as heterarchical constructs: any hypersimplex may contain vertices, simplices, or other hypersimplices, regardless of ``level''. Boundaries provide scoped visibility without altering identity, supporting overlapping subsystems while maintaining global consistency.

This paper presents the structural kernel of Hypernetwork Theory as a unified whole, consolidating and extending the formal developments first outlined in the author’s doctoral thesis~\cite{charlesworth2025thesis}. It provides a clarified and mechanisable account of typed $n$-ary relational modelling, formalising refined notation and axioms (A1--A5) for vertices, simplices, hypersimplices, relation binding, aggregation typing, boundaries, and ordering. It develops a complete algebra of structural operators---merge, meet, difference, prune, and split---together with validity conditions (C1--C7) and deterministic decision tables, forming a semantics-preserving and mechanisable modelling framework. Boundaries are introduced only to the extent required for notation and axioms; their full structural behaviour is developed in a companion paper.

\paragraph{Contributions.}
This unified paper makes four contributions:
\begin{enumerate}[leftmargin=*]
	\item \textbf{Conceptual foundations.} A model-first account of explicit, typed $n$-ary relational structure in which diagrams are views of underlying semantics rather than the model itself.
	\item \textbf{Refined notation and axioms.} A precise definition of vertices, simplices, hypersimplices, aggregation typing, relation binding, boundaries, and ordering as semantically explicit constructs governed by axioms~A1--A5.
	\item \textbf{Operational semantics.} A complete algebra of structural operators---merge, meet, difference, prune, and split---supported by validity conditions~C1--C7 and decision tables that ensure deterministic, semantics-preserving composition.
	\item \textbf{Mechanisation.} Deterministic algorithms realising the operator algebra, demonstrating that HT is fully executable and providing a reproducible foundation for tooling.
\end{enumerate}
	
\paragraph{Organisation.}
Section~\ref{sec:background-motivation} outlines the conceptual motivation and limitations of existing approaches. Section~\ref{sec:notation} formalises the refined notation and constructs. 
Section~\ref{sec:axioms} presents axioms~A1--A5. Section~\ref{sec:structure-and-composition} explains the structural properties that follow from these axioms. Section~\ref{sec:axioms-validity} introduces operator-level validity conditions~C1--C7. Section~\ref{sec:algebraic} develops the algebra of merge, meet, difference, prune, and split. Section~\ref{sec:mechanisation} details their mechanisation. Section~\ref{sec:examples} provides worked examples. Section~\ref{sec:discussion} discusses generality, limitations, and implications. Section~\ref{sec:related} positions the work within related formalisms. Section~\ref{sec:conclusion} concludes and identifies the boundary calculus as the next stage in the development of HT.

\section{Background and Motivation: Why Existing Approaches Fall Short}
\label{sec:background-motivation}

Modelling approaches span graphs and hypergraphs, UML and SysML, MOF-based languages, ontology frameworks such as RDF/OWL, and Model-Driven Engineering (MDE). Despite their diversity, four limitations recur: $n$-ary relations are absent or reconstructed indirectly; multilevel structure is imposed by convention rather than intrinsic constructs; composition is ad hoc or tool-dependent; and semantics remain implicit. Abstract traditions---including category theory~\cite{goguen1991manifesto,spivak2014ct} and sheaf theory~\cite{curry2014sheaves,maclane1992sheaves}---address composition and locality in general form but do not yield an executable, semantics-preserving foundation.

\subsection{Graphs, Hypergraphs, Networks of Networks, and Object Orientation}
Graphs capture only binary relations, forcing higher-order semantics into chains or auxiliary constructs. Hypergraphs extend to n\text{-}ary edges but treat them as untyped sets~\cite{RN3, RN383}, leaving aggregation semantics and role alignment external. Networks\text{-}of\text{-}networks and multilayer or multiplex models strengthen graph structure by adding layers or inter\text{-}layer couplings, but they remain graph\text{-}based and therefore binary at the semantic level, requiring n\text{-}ary structure, part--whole semantics, and taxonomic alternatives to be reconstructed indirectly~\cite{RN363,RN385,RN362,RN361}. Object orientation~\cite{booch2008object} provides class\text{-}level structure but conflates part--whole and taxonomic semantics, with composition largely informal, tool\text{-}specific, and driven by naming conventions rather than typed relational structure. HT differs by realising each n\text{-}ary relation as an ordered, typed hypersimplex bound to a relation symbol $R$, embedding $\alpha/\beta$ semantics directly in the construct.

\subsection{UML, SysML, MOF, and MDE}
UML and SysML~\cite{delligatti2013sysml,eriksson2003uml} centre on diagrammatic views and binary associations, with hierarchy enforced by the MOF stack~\cite{siegel2014object}. These notations prioritise communication but rely on implicit semantics and operate under the Closed World Assumption~\cite{Keet2013b}. MDE introduces transformation pipelines such as \textit{Atlas Transformation Language} (ATL) and \textit{Query/View/Transformation} (QVT)~\cite{jouault2008atl,kurtev2006qvt,omg2016qvt} and tooling such as \textit{Eclipse Modeling Framework} (EMF)~\cite{steinberg2008emf}, but inherits binary meta-models and lacks semantics-preserving guarantees. Composition remains brittle and tool-dependent.

\subsection{Ontologies and Description Logics}
Ontologies in RDF and OWL~\cite{brickley2004rdf,RN688}, grounded in Description Logics~\cite{RN666,RN664}, reconstruct $n$-ary relations via triples, mediating classes, or reified role patterns. This reconstruction is indirect and obscures aggregation semantics. Algebraic composition is absent, cross-level links are not first-class, and although these frameworks adopt the Open World Assumption~\cite{Keet2013a}, inference is required to determine exclusion, scope, and structural interpretation, limiting mechanisability.

\subsection{Abstract and Systems-Level Traditions}
Category theory~\cite{goguen1991manifesto,spivak2014ct} formalises composition abstractly but leaves typing, ordering, and scoping outside the framework. Sheaf theory~\cite{curry2014sheaves,maclane1992sheaves} provides local-to-global coherence but requires fixed structures that are difficult to mechanise. Systems-of-systems engineering~\cite{jamshidi2008sos,maier1998architecting} emphasises subsystem boundaries and coordination, but these boundaries are descriptive rather than structural and lack the algebraic closure and identity-preserving behaviour required for reproducible composition.

\begin{table}[h!]
	\centering
	\caption{Comparison of established modelling approaches with Hypernetwork Theory (HT).}
	\label{tab:comparison}
	\begin{tabular}{p{0.20\textwidth} p{0.37\textwidth} p{0.37\textwidth}}
		\hline
		\textbf{Approach} & \textbf{Limitations / Characteristics} & \textbf{HT Contribution} \\
		\hline
		Graphs & Binary-only; higher-order structure forced through chains; no embedded semantics. & Native typed $n$-ary relations with explicit $\alpha/\beta$ semantics. \\
		\hline
		Hypergraphs & $n$-ary edges untyped; semantics external. & Hypersimplices typed as $\alpha$ or $\beta$; semantics embedded directly. \\
		\hline
		Object Orientation & Conflates part--whole and taxonomic semantics; composition informal. & Explicit $\alpha/\beta$ distinction; semantics-preserving operators. \\
		\hline
		Ontologies & $n$-ary structure reconstructed; cross-level links absent; inference required. & Native typed $n$-ary relations; explicit exclusion; scoped boundaries; identity-preserving projection. \\
		\hline
		UML / SysML / MOF & Diagram-centric; binary associations; enforced hierarchy; CWA by default. & Model-first stance; heterarchy as first-class; openness with rule-based closure. \\
		\hline
		Model-Driven Engineering & Binary meta-models; composition via scripts; tool-dependent. & Deterministic structural operators enabling mechanisable composition. \\
		\hline
		Category Theory & Composition abstract; semantics external. & Typed relations, roles, ordering, and boundaries as structural primitives. \\
		\hline
		Sheaf Theory & Local–global consistency but fixed structure; hard to mechanise. & Algorithmic projection with rule-based closure. \\
		\hline
		Systems of Systems & Descriptive boundaries; no algebraic closure. & Identity-preserving scope and semantics-preserving operators. \\
		\hline
	\end{tabular}
\end{table}

\subsection{Summary of Gaps}
Five persistent gaps recur across established approaches:  
(i) $n$-ary relations remain implicit or reconstructed;  
(ii) multilevel structure is imposed rather than intrinsic;  
(iii) hierarchy is enforced rather than optional;  
(iv) composition is ad hoc or informal;  
(v) mechanisation depends on diagrams, inference, or tooling.

HT addresses these gaps by embedding semantics directly in its constructs---typed $n$-ary relations, explicit relation binding, scoped boundaries, ordering, and modeller-supplied exclusion---and by enforcing closure algorithmically. This reconciliation of openness with rule-based validity motivates the notation, axioms, and operator algebra developed in the remainder of this paper.

\section{Hypernetwork Notation and Core Constructs}
\label{sec:notation}

Hypernetwork Theory is due to Johnson~\cite{RN365,RN197,RN232}. This paper builds on his original formulation of vertices, simplices, hypersimplices, and $\alpha/\beta$ aggregations, extending the theory with refined notation, formal axioms, and a mechanisable algebra of structural operators. These constructs extend modelling beyond graphs and hypergraphs by giving $n$-ary relations explicit structure. The refinements below make these constructs formally precise and directly mechanisable, enabling HT to support typed $n$-ary semantics, multilevel structure, boundaries, and deterministic structural operations.

\subsection{Aggregation Types ($\alpha$, $\beta$)}
Every hypersimplex must declare its aggregation type. A minimal micro-example illustrates each case:
\[
\begin{aligned}
	\varsigma^{\alpha}_{CarAssembly} &= \langle \textit{Chassis}, \textit{Engine}, \textit{WheelSet} \; ; \; R_{carAssembly} \rangle,\\
	\varsigma^{\beta}_{VehicleType} &= \{ \textit{Car}, \textit{Van}, \textit{Truck}, \textit{SUV} \; ; \; R_{vehicleType} \}.
\end{aligned}
\]

The $\alpha$-aggregation expresses part–whole semantics: removing \emph{Engine} changes the identity of \emph{Car}. The $\beta$-aggregation expresses taxonomic alternatives: removing \emph{Van} does not change the identity of the remaining vehicle types.

\subsection{Vertices and Hypersimplices}

Vertices are atomic identifiers that act as the $0$-simplices of the theory. A \emph{simplex} is the standard topological construct: an ordered tuple of vertices (or, recursively, hypersimplices) representing an abstract $n$-simplex. Hypernetwork Theory extends this topological notion by attaching a relation symbol $R$ and an explicit $\alpha/\beta$ aggregation type to the simplex, yielding a \emph{hypersimplex}. The hypersimplex is therefore the semantic unit of the theory: it records the ordered participants, their role alignment under $R$, and whether the $n$-ary relation is conjunctive or disjunctive.

A minimal example illustrates the distinction:
\[
\sigma = \langle \textit{Rim, Hub, Tyre} \rangle
\qquad
\varsigma^{\alpha}_{\text{WheelAssembly}}
= \langle \textit{Rim, Hub, Tyre} ; R_{\text{wheelAssembly}} \rangle .
\]
Here $\sigma$ is the underlying topological simplex, while $\varsigma$ is the typed $n$-ary relation built from it.

Because hypersimplices may include vertices or other hypersimplices as ordered participants, they form a recursive hierarchy capable of expressing heterarchical and multilevel structure directly.

A modeller may also introduce an \emph{anti-vertex} $\tilde v$ to mark explicit exclusion, e.g.
\[
\langle \textit{Wheel}_1, \textit{Wheel}_2, \textit{Wheel}_3, \textit{Wheel}_4,
\medtilde{\textit{SpareWheel}} ; R_{\text{carWheels}} \rangle ,
\]
indicating that a spare wheel is intentionally absent. Anti-vertices are explicit constructs, introduced by the modeller or by operators, and do not affect closure, which is guaranteed by the axioms and operators of HT.

\subsection{Hypernetworks}

Johnson’s original formulation of hypernetworks~\cite{RN365,RN197,RN232} treats systems as structured aggregations of relations built from vertices and hypersimplices arranged through containment. In this classical view, a hypersimplex is an untyped higher–order relational unit: a topological simplex whose participants have been grouped to form a relation, with the intended semantics inferred rather than declared. This captures how higher–order and multilevel organisation arise once relations can contain relations, but the semantic nature of each aggregation—whether conjunctive, taxonomic, or alternative—remains implicit and cannot be checked or mechanised.

Hypernetwork Theory (HT) retains the insight that systems are structured relational compositions but makes the semantics explicit. Each $n$-ary relation is typed as $\alpha$ (conjunctive) or $\beta$ (disjunctive), participants are aligned to ordered roles under a relation symbol $R$, identifiers are preserved globally, boundaries provide scoped visibility, and deterministic operators act on these constructs. The originally implicit classical formulation is therefore sharpened into a precise, typed, and executable semantics.

A hypernetwork is simply a finite collection of typed $n$-ary relations, each realised as a hypersimplex. For example:
\[
\begin{aligned}
	H = \big\{
	&\varsigma^{\alpha}_{\text{Car}}
	= \langle \varsigma^{\alpha}_{\text{WheelAssembly}},
	\varsigma^{\alpha}_{\text{FrameSet}} ;
	R_{\text{car}} \rangle, \\[2pt]
	&\varsigma^{\alpha}_{\text{WheelAssembly}}
	= \langle \textit{Rim, Hub, Tyre} ;
	R_{\text{wheelAssembly}} \rangle, \\[2pt]
	&\varsigma^{\alpha}_{\text{FrameSet}}
	= \langle \textit{Frame, SeatTube, TopTube} ;
	R_{\text{frameSet}} \rangle, \\[2pt]
	&\textit{Rim, Hub, Tyre, Frame, SeatTube, TopTube}
	\big\}
\end{aligned}
\]
This $H$ is a hypernetwork: a coherent structure formed from vertices and typed hypersimplices.

\subsection{Notation}
This subsection fixes the concrete notation used throughout the paper, giving the symbols for vertices, anti-vertices, simplices, hypersimplices, relation binding, aggregation typing, boundaries, and the operators. The aim is simply to make later definitions unambiguous: everything in the theory is expressed using these few constructs, and the table below summarises them for reference.

\begin{table}[h!]
	\centering
	\caption{Core notation in Hypernetwork Theory.}
	\label{tab:notation}
	\begin{tabular}{@{}p{3cm}p{12cm}@{}}
		\toprule
		\textbf{Symbol} & \textbf{Meaning} \\ \midrule
		$H$ & A hypernetwork: a finite collection of hypersimplices. \\
		$v$ & Vertex (atomic identifier). \\
		$\tilde{v}$ & Anti-vertex (explicit exclusion, introduced by the modeller or by operators). \\
		$\sigma$ & Simplex: ordered tuple of vertices or hypersimplices. \\
		$\varsigma$ & Hypersimplex bound to a relation $R$. \\
		$\alpha/\beta$ & Aggregation type: conjunctive or disjunctive. \\
		$R$ & Relation symbol fixing arity and roles. \\
		$B(H)$ & Set of boundaries defined in $H$. \\
		$B(H,b) = \pi_b(H)$ & Projection of $H$ by boundary $b$, yielding a valid sub-hypernetwork. \\
		$\sqcup, \sqcap, /, \ominus, \pi$ & Merge, meet, difference, prune, split operators. \\
		\bottomrule
	\end{tabular}
\end{table}

\subsection{Relations and the Role of $R$}
Every hypersimplex is bound to a relation symbol $R$ that fixes arity, maps participants to ordered roles, and anchors semantic interpretation. A micro-example helps clarify:
\[
\varsigma^\alpha_{\text{PatientAdmission}} = \langle \textit{Patient, Ward, Time} ; R_{\text{admittedTo}} \rangle,
\]
where $R_{\text{admittedTo}}$ fixes a ternary relation with explicit participant–role alignment.

Relation symbols may be \textbf{explicit}, \textbf{implicit}, or \textbf{anonymous}, depending on whether meaning is declared or inferred from structure. Combined with $\alpha/\beta$ typing, $R$ makes the hypersimplex the fundamental semantic unit of HT.

\subsection{Boundaries}
A boundary is an element $b \in B(H)$ with projection function
\[
B(H,b) = \pi_b(H).
\]
Consider the following micro-example:
\[
\varsigma^\alpha_{\text{TriageTent}} = \langle \textit{TentFrame}, \textit{Supplies}, \textit{Medic1}; R_{\text{facility}} ; b_{\text{Logistics}} \rangle.
\]
A non-percolating boundary applies only to the tagged element; a percolating boundary propagates to descendants. Boundaries act as scoping devices rather than containers: identity of vertices and hypersimplices is preserved globally. Projection is semantics-preserving and yields a closed sub-hypernetwork aligning with the structural projection $\pi_b(H)$.

\subsection{Ordering}
Although a hypernetwork is written as a collection, algorithms execute sequentially. Ordering matters:
\begin{itemize}[leftmargin=*]
	\item within simplices, where order determines role mapping under $R$;
	\item across hypersimplices, where insertion order may influence algorithmic outcomes.
\end{itemize}
A minimal ordering-sensitive example:
\[
\langle \textit{Patient, Clinician, Time}; R_{\text{visit}} \rangle \neq \langle \textit{Clinician, Patient, Time} ; R_{\text{visit}} \rangle
\]
because role alignment differs.

Deterministic decision tables ensure reproducible results for any given ordering.

\subsection{Open-world and Closed-world Assumptions}
The \textit{Closed World Assumption} (CWA)~\cite{Keet2013b} treats absence as negation, whereas the \textit{Open World Assumption} (OWA)~\cite{Keet2013a} treats absence as unknown. HT blends these: openness applies to scope, permitting extension; closure applies to rules, ensuring every construct is well-formed, boundaries respected, and exclusions explicit.

A simple micro-example illustrates OWA. If only
\[
\varsigma^\alpha_{\text{Car}} = \langle \textit{Body, Engine, Gears} ; R_{\text{hasParts}} \rangle
\]
is present, nothing in the model implies that \emph{Wheels} do not exist—only that they have not yet been stated. This reconciliation makes hypernetworks both extensible and mechanisable, providing the semantic basis for deterministic structural composition.

\section{Core Axioms}
\label{sec:axioms}

Hypernetwork Theory is grounded in a small set of design principles:  
(i) every entity has a globally unique identity;  
(ii) exclusions may be made explicit;  
(iii) all $n$-ary relations are typed as conjunctive ($\alpha$) or disjunctive ($\beta$);  
(iv) every relation is bound to a symbol $R$ fixing arity and ordered roles; and  
(v) boundaries act as scoping devices without altering identity.  
These informal principles are formalised below as axioms A1–A5.  
Micro-examples are included to anchor each axiom in a minimal, concrete construct.

\paragraph{Axiom A1: Vertex and Hypersimplex Uniqueness.}
Every vertex and hypersimplex carries a globally unique identifier. For instance, the two hypersimplices
\[
\varsigma^\alpha_{\text{WheelAssembly}} = \langle \textit{Wheel, Axle}; R_{\text{config}} \rangle,
\qquad
\varsigma^\alpha_{\text{WheelAssembly}'} = \langle \textit{Wheel, Axle} ; R_{\text{config}} \rangle
\]
must be treated as distinct unless declared identical. Identity is preserved under projection and all structural operators, ensuring consistent composition and avoiding ambiguity.

\paragraph{Axiom A2: Explicit and Semantically Induced Exclusion.}

A modeller may introduce an anti-vertex $\medtilde{v}$ to mark explicit exclusion. For example,
\[
\varsigma^\alpha_{\text{Wheels}}=\langle\textit{Wheel}_1, \textit{Wheel}_2, \textit{Wheel}_3, \textit{Wheel}_4, \medtilde{\textit{SpareWheel}}; R_{\text{carWheels}}\rangle,
\]
expresses intentional absence within a configuration.

Anti-vertices may also be introduced by HT operators only when explicitly required by their decision tables (e.g., by prune). Boundaries and projection never generate anti-vertices, and they preserve all modeller-supplied ones unchanged.

\paragraph{Axiom A3: Aggregation Typing.}
Every hypersimplex is typed as $\alpha$ or $\beta$. Micro-examples:
\[
\varsigma^\alpha_{\text{Car}} = \langle \textit{Body, Engine, Wheels}; R_{\text{hasParts}} \rangle
\qquad
\varsigma^\beta_{\text{VehicleType}} = \{ \textit{Car, Van, Truck} ; R_{\text{isA}} \}.
\]
Typed $n$-ary structure avoids the reconstruction steps required in untyped hypergraphs or triple-based ontologies and guarantees deterministic behaviour.

\paragraph{Axiom A4: Relation Binding.}
Each hypersimplex is bound to a relation symbol $R$ fixing arity and ordered roles. For example:
\[
\varsigma^\alpha_{\text{Admission}} = \langle \textit{Patient, Ward, Time} ; R_{\text{admittedTo}} \rangle,
\]
where $R_{\text{admittedTo}}$ specifies a ternary relation with fixed role alignment. Binding ensures that semantics and participant–role mapping are explicit and mechanisable.

\paragraph{Axiom A5: Boundary Scoping.}
For a hypernetwork $H$, each boundary $b \in B(H)$ delimits scope without altering identity. A simple example:
\[
\varsigma^\alpha_{\text{TriageTent}} = \langle \textit{TentFrame}, \textit{Supplies}, \textit{Medic1}; R_{\text{facility}} ; b_{\text{Logistics}} \rangle.
\]
Projection $B(H,b)=\pi_b(H)$ preserves the identifiers of all included elements and yields a valid sub-hypernetwork. Boundaries may be non-percolating (applied locally) or percolating (propagating to descendants), supporting structured visibility and controlled recomposition.

\medskip

Together, A1–A5 ensure that hypernetworks are well-formed, semantically explicit, and amenable to deterministic manipulation. They reconcile openness with rigour: models remain extensible under the Open World Assumption, yet every construct and operation must satisfy these rules of validity. Appendix~A analyses their consequences, including closure under projection and the behaviour of modeller-supplied exclusion.

\section{Structural Properties and Composition}
\label{sec:structure-and-composition}

Hypernetwork Theory (HT) treats a hypernetwork (\textit{Hn}) not as a loose collection of tuples but as a coherent, semantically explicit structure governed by axioms~A1--A5. Once identity, exclusion, typing, relation binding, and boundary scoping are enforced, the \textit{Hn} exhibits a set of structural properties that make principled composition possible. The micro-examples below anchor each property in minimal concrete structure.

\subsection{Structural Properties of a Hypernetwork}

\paragraph{Closure.}
Each operator is closed over the class of hypernetworks: valid inputs yield valid outputs. Consider a simple name-lift example:
\[
\textit{WheelAssembly}, \qquad
\varsigma^\alpha_{\text{WheelAssembly}}
= \langle \textit{Rim, Hub, Tyre} ; R_{\text{wheelAssembly}} \rangle.
\]
If a bare vertex \textit{WheelAssembly} is present in $H$ and a hypersimplex with the same identifier is later introduced, the hypersimplex replaces the vertex so that no dangling or ill-typed structure arises. Under the Open World Assumption, additional hypersimplices may be added freely, but every intermediate state must remain well-formed under A1--A5.

\paragraph{Consistency.}
Uniqueness, type correctness, relation binding, ordering, and boundary scoping keep the hypernetwork coherent as it evolves. For instance:
\[
\varsigma^\beta_{\text{VehicleType}} = \{ \textit{Car, Van, Truck} ; R_{\text{isA}} \}
\]
cannot be merged with a hypersimplex using the same identifier but a different aggregation type. Anti-vertices such as $\medtilde{\textit{SpareWheel}}$ allow explicit exclusion without violating well-formedness.

\paragraph{Explicit Semantics.}
Semantics are embedded directly in each construct. In
\[
\varsigma^\alpha_{\text{Admission}} = \langle \textit{Patient, Ward, Time} ; R_{\text{admittedTo}} \rangle,
\]
the ternary relation $R_{\text{admittedTo}}$ fixes arity and role alignment, removing any need for semantic reconstruction. Boundaries add scoped semantics without altering identity.

\paragraph{Algorithmic Determinism.}

A hypernetwork is extensionally a set of hypersimplices and therefore unordered. For definition and storage,
\[
H = \{ \varsigma_1, \varsigma_2, \varsigma_3 \},
\]
carries no intrinsic sequence.  Operationally, however, algorithms must process these hypersimplices in a fixed traversal order: without a stable sequence, the outcome of composition may vary between runs.

Each hypersimplex has its own internal role-order that must be preserved, e.g.
\[
\langle \textit{Patient, Clinician, Time}; R_{\text{visit}} \rangle \neq \langle \textit{Clinician, Patient, Time} ; R_{\text{visit}} \rangle
\]

Determinism therefore requires two forms of ordering: a fixed traversal order for the hypernetwork, and preservation of the internal role-order of each hypersimplex.

\paragraph{Structural Integrity.}
Closure, consistency, explicit semantics, and determinism together ensure that a hypernetwork remains stable under manipulation. For example, a boundary projection of:
\[
\varsigma^\alpha_{\text{TriageTent}} = \langle \textit{TentFrame}, \textit{Supplies}, \textit{Medic1}; R_{\text{facility}} ; b_{\text{Logistics}} \rangle.
\]
preserves identity, aggregation semantics, and role alignment. This integrity supports reliable comparison, merge, decomposition, and projection.

\subsection{Structural Operators}

These properties make semantics-preserving composition possible. HT provides five core operators:

\begin{itemize}[leftmargin=*]
	\item \textbf{Merge ($\sqcup$)} integrates two hypernetworks when relation symbols, aggregation types, and roles align.  
	\emph{Micro-example:}
	\[
	\varsigma^\alpha_{\text{Car}} \sqcup 
	\varsigma^\alpha_{\text{WheelAssembly}}
	\]
	yields their union if types, roles, and identifiers do not conflict.
	
	\item \textbf{Meet ($\sqcap$)} extracts common structure, respecting $\alpha/\beta$ semantics.  
	\emph{Example:}
	\[
	\{\textit{Car, Van} ; R_{\text{isA}}\} \sqcap \{\textit{Car, Truck} ; R_{\text{isA}}\} = \{\textit{Car} ; R_{\text{isA}}\}.
	\]
	
	\item \textbf{Difference ($/$)} isolates structure in one hypernetwork but not another.  
	\emph{Example:}
	\[
	\{Car, Van ; R_{\text{isA}}\} / \{Car ; R_{\text{isA}}\} = \{Van ; R_{\text{isA}}\}.
	\]
	
	\item \textbf{Prune ($\ominus$)} removes selected structure while preserving closure. Each chosen participant is replaced everywhere by its anti-vertex; any hypersimplex that becomes invalid or fully pruned is deleted; and any vertex or hypersimplex rendered unreachable is removed. Elements still referenced by surviving hypersimplices are always retained.
	
	\emph{Example:} pruning \texttt{Engine} from
	\[
	\langle \textit{Body}, \varsigma^\alpha_{\text{Engine}}, \textit{Wheels}; R_{\text{hasParts}} \rangle
	\]
	yields
	\[
	\langle \textit{Body}, \tilde{\textit{Engine}}, \textit{Wheels}; R_{\text{hasParts}} \rangle .
	\]
	The rewritten hypersimplex is deleted only if $R_{\text{hasParts}}$ forbids an excluded role, and $\varsigma^\alpha_{\text{Engine}}$ is removed only if no remaining hypersimplex in $H$ refers to it.
	
	\item \textbf{Split ($\pi$)} projects a sub-hypernetwork via boundaries or vertex-based closure.  
	\emph{Example:}
	\[
	B(H,b_{\text{Logistics}}) = \pi_{b_{\text{Logistics}}}(H)
	\]
	selects only logistics-scoped elements, such as \texttt{TriageTent}, while preserving identifiers.
\end{itemize}

These operators refine basic set-theoretic operations by embedding semantic constraints. Merge, meet, and difference embed relational meaning, boundary behaviour, and aggregation semantics; prune supports controlled editing; split generalises projection through boundary-aware or role-closed selection. Each operator follows an explicit decision table, ensuring deterministic, semantics-preserving behaviour.

\section{Axiomatic Conditions for Validity}
\label{sec:axioms-validity}

For composition to be semantics-preserving, the operators of Section~\ref{sec:structure-and-composition} must act under conditions that guarantee closure, consistency, and determinism. These are not new axioms; they are \emph{specialisations} of the core principles established earlier. Whereas A1--A5 ensure that a hypernetwork is valid in isolation, C1--C7 extend this to \emph{compositional} validity. Micro-examples are included to show how each condition manifests in practice.

\paragraph{Core axioms (A1--A5).} Validity requires globally unique identifiers (A1); optional explicit anti-vertices for exclusion (A2); mandatory $\alpha/\beta$ typing (A3); binding of each hypersimplex to a relation $R$ with ordered roles (A4); and boundary scoping with defined percolation behaviour (A5).

A minimal example satisfying all five is: 
\[
\varsigma^\alpha_{Admission} = \langle Patient, Ward, Time ; R_{\text{admittedTo}} ; b_{\text{Medical}} \rangle, \qquad \medtilde{\textit{MissingTime}} \in H,
\] 
where $R_{\text{admittedTo}}$ fixes arity and role ordering, $b_{\text{Medical}}$ scopes visibility, and $\medtilde{\textit{MissingTime}}$ is a modeller-supplied anti-vertex marking an explicitly excluded value. These principles make every hypernetwork semantically explicit, structurally closed, and extendable under the Open World Assumption.

\paragraph{Operator conditions (C1--C7).}
Specialising A1--A5 for composition yields the following constraints. Each is illustrated with a minimal construct showing when the condition succeeds or blocks composition.

\begin{itemize}[leftmargin=*]
	\item[C1.] \textbf{$R$-compatibility.}  
	Only hypersimplices with matching relation symbols may compose; arity and ordered roles must align.  
	\[
	\langle A, B ; R_{\text{pair}} \rangle \quad\text{can merge with}\quad \langle A, B ; R_{\text{pair}} \rangle,
	\]
	but not with
	\[
	\langle A, B, C ; R_{\text{triple}} \rangle.
	\]
	
	\item[C2.] \textbf{Aggregation preservation.}  
	$\alpha/\beta$ typing is immutable.  
	\[
	\langle \textit{Body, Engine, Wheels} ; R_{\text{hasParts}} \rangle \quad\text{cannot merge with}\quad \{ \textit{Body, Engine, Wheels}; R_{\text{hasParts}} \}.
	\]
	
	\item[C3.] \textbf{Explicit exclusion discipline.}  
	Anti-vertices may be preserved or introduced by operators, but only under explicit rules; no operator introduces them arbitrarily or relies on implicit exclusion.
	\[
	\medtilde{\textit{SpareWheel}} \; \in H_1,\quad \medtilde{\textit{SpareWheel}} \; \notin H_2
	\]
	does not force $H_2$ to include it; operators simply preserve it when encountered.
	
	\item[C4.] \textbf{Boundary preservation.}  
	All boundary tags must be preserved through composition.  
	\[
	\varsigma^\alpha_{\text{TriageTent}} = \langle \textit{TentFrame}, \textit{Supplies}, \textit{Medic1}; R_{\text{facility}} ; b_{\text{Logistics}} \rangle.
	\]
	remains tagged with $b_{\text{Logistics}}$ under merge, meet, difference, prune, and split. Projection $B(H,b)=\pi_b(H)$ ensures scope-respecting submodels.
	
	\item[C5.] \textbf{OWA with rule-based closure.}  
	Operators must return well-formed hypernetworks. If pruning removes a participant:
	\[
	\langle \textit{Body, Engine, Wheels}; R_{\text{hasParts}} \rangle \ominus \{ \textit{Engine, Body} \}, \qquad \langle \medtilde{\textit{Body}}, \medtilde{\textit{Engine}}, \textit{Wheels}; R_{\text{hasParts}} \rangle,
	\]
	the hypersimplex is removed to maintain closure; no invalid partial structure is allowed.
	
	\item[C6.] \textbf{Ordering determinism.}  
	Role order is fixed, and insertion order yields deterministic outcomes.  
	\[
	\langle \textit{Patient, Clinician, Time}; R_{\text{visit}} \rangle \neq \langle \textit{Clinician, Patient, Time} ; R_{\text{visit}} \rangle
	\]
	and a deterministic merge sequence ensures reproducibility for any fixed ordering.
	
	\item[C7.] \textbf{Algorithmic soundness.}  
	Procedures must enforce A1--A5 throughout. The reference implementation applies uniqueness checks, role alignment, type checks, and boundary preservation at each step, guaranteeing that every state of the hypernetwork remains valid.
\end{itemize}

\medskip

Together, A1--A5 and C1--C7 define the full validity regime for structural composition. Operators take hypernetworks as operands and return new hypernetworks while preserving identity, typing, relation binding, and scope. This contrasts with approaches where composition is syntactic, tool-dependent, or defined only informally. Here, composition is governed by explicit axioms and executable conditions, making the algebra principled and mechanisable. Section~\ref{sec:algebraic} analyses the algebraic consequences, and Section~\ref{sec:mechanisation} shows how these conditions are enforced algorithmically.

\section{Algebraic Properties}
\label{sec:algebraic}

The structural operators of Section~\ref{sec:structure-and-composition} form a coherent \emph{algebra} of composition for hypernetworks. Earlier treatments considered merge, meet, difference, and split individually; here these results are unified, extended to include prune, and grounded collectively in A1--A5 and the operator conditions C1--C7. The micro-examples below illustrate each algebraic behaviour in minimal form.

\paragraph{Closure.}
Each operator is closed over the class of hypernetworks: valid inputs yield valid outputs. Consider a simple name-lift example:
\[
\textit{WheelAssembly}, \qquad
\varsigma^\alpha_{\text{WheelAssembly}} = \langle \textit{Rim, Hub, Tyre} ; R_{\text{wheelAssembly}} \rangle.
\]
If a bare vertex \textit{WheelAssembly} is present in $H$ and a hypersimplex with the same identifier is later introduced, the hypersimplex replaces the vertex so that no dangling or ill-typed structure arises. Under the Open World Assumption, further hypersimplices may be added freely, but every intermediate result must remain well-formed under A1--A5.

\paragraph{Non-associativity and non-commutativity.}
The operators behave as structured, semantics-aware transformations. For merge:
\[
(H_1 \sqcup H_2) \sqcup H_3 \quad\neq\quad H_1 \sqcup (H_2 \sqcup H_3)
\]
whenever ordering exposes different yet valid alignments. Similarly, meet and difference behave differently under reordering:
\[
H_1 / H_2 \neq H_2 / H_1, \qquad (H_1 \sqcap H_2) \sqcap H_3 \neq H_1 \sqcap (H_2 \sqcap H_3).
\]
These behaviours reflect legitimate alternatives arising from ordering and scope, not defects.

\paragraph{Idempotence.}
Merge, meet, prune, nd split are idempotent:
\[
H \sqcup H = H, \qquad H \sqcap H = H, \qquad (H \ominus S) \ominus S = H \ominus S \qquad pi_b(\pi_b(H)) = \pi_b(H).
\]
Difference is not idempotent:
\[
H / H = \emptyset, \quad (H / H_1) / H_1 = \emptyset,
\]
Percolation affects tag assignment when hypersimplices are introduced or rewritten, but split itself never propagates tags; it is a pure projection over the tags already present.

\paragraph{Monotonicity.}
Let $\sqsubseteq$ denote the sub-hypernetwork relation. Merge, meet, and split preserve monotonic growth:
\[
H_1 \sqsubseteq H_1 \sqcup H_2, \qquad H_1 \sqcap H_2 \sqsubseteq H_1, \qquad \pi_b(H) \sqsubseteq H.
\]
Difference is monotone in its first argument and antitone in its second:
\[
H_1 \sqsubseteq H_1' \Rightarrow (H_1 / H_2) \sqsubseteq (H_1' / H_2), \qquad H_2 \sqsubseteq H_2' \Rightarrow (H_1 / H_2') \sqsubseteq (H_1 / H_2).
\]
Prune is monotone in its deletion set:
\[
S \subseteq S' \Rightarrow (H \ominus S') \sqsubseteq (H \ominus S).
\]

\paragraph{Identity elements.}
The empty hypernetwork is the identity for merge:
\[
H \sqcup \emptyset = H,
\]
and the zero for meet:
\[
H \sqcap \emptyset = \emptyset.
\]
Difference and prune satisfy:
\[
H / \emptyset = H, \qquad H / H = \emptyset, \qquad H \ominus \emptyset = H.
\]
Split preserves identity under the universal criterion:
\[
\pi_{\top}(H) = H.
\]

\paragraph{Semantics preservation.}
All operators preserve $\alpha/\beta$ typing, relation binding, vertex identity, and boundary tags. For example:
\[
\langle \textit{TentFrame, Supplies} ; R_{\text{facility}} ; b_{\text{Logistics}} \rangle
\]
retains $b_{\text{Logistics}}$ under merge, meet, difference, prune, and split. Anti-vertices, if present, are preserved, and may be introduced by operators only under the explicit exclusion rules of A2 and C3. Each operator maintains structural validity: meet retains overlapping participants, prune removes dependencies but not identity, and split yields closed sub-hypernetworks with preserved scope.

\paragraph{Determinism.}
Determinism means that once the modeller fixes an ordering of inputs, the operator produces one unique result for that ordering. This is easiest to see in a small micro-example where a bare vertex later appears as a typed hypersimplex. Consider the ordered input
\[
\textit{Patient} \quad\text{followed by}\quad 
\langle \textit{Patient, Clinician, Time}; R_{\text{visit}} \rangle.
\]
Here the vertex \textit{Patient} is inserted first. When the hypersimplex arrives, it deterministically replaces the vertex, because the hypersimplex is the more specific construct tied to a relation symbol and ordered roles. Reversing the order,
\[
\langle \textit{Patient, Clinician, Time}; R_{\text{visit}} \rangle 
\quad\text{followed by}\quad 
\textit{Patient},
\]
still gives a unique result: the second step is a deterministic ``ignore'', because the hypersimplex already defines that identifier and a bare vertex adds no new information. These two orders produce two different but equally valid hypernetworks, yet each order has exactly one outcome. That is the sense in which composition is deterministic: the modeller may choose different input orders, but once the order is fixed, the result is fixed.

\medskip

In summary, hypernetwork composition defines a mechanisable algebra: closed, semantics-preserving, deterministic under fixed order, and sensitive to ordering and scope where this reflects underlying multilevel semantics. Incorporating prune, clarifying split, and analysing the operators as a unified algebra completes the progression from conceptual foundations to a fully executable structural framework. These results form the algebraic base on which the boundary calculus builds.

\section{Mechanisation of Structural Composition}
\label{sec:mechanisation}

The algebra of Section~\ref{sec:structure-and-composition} attains full force when expressed as executable procedures. Mechanisation---including \texttt{prune} and the clarified \texttt{split}---demonstrates that the operators of Hypernetwork Theory can be realised as deterministic algorithms governed by explicit decision tables. No heuristics or tool conventions are required: operators act reproducibly on hypernetworks, preserving closure and semantics under A1--A5 and the operator conditions C1--C7. This contrasts with UML, SysML, and MDE, where behaviour is often tool-specific and not semantically grounded~\cite{delligatti2013sysml,eriksson2003uml,omg2014mda}.

\subsection{Reference Implementation in Hora}

Mechanisation is validated in the reference implementation, which provides parser, schema, and operator support for the refined notation and axioms. Boundaries are represented as elements $b \in B(H)$ with projections $B(H,b)=\pi_b(H)$, ensuring that scope-based and structure-based composition coincide.

\paragraph{Micro-example.}
Given a hypernetwork
\[
H = \{
\varsigma^{\alpha}_{\text{IncidentA}},
\varsigma^{\alpha}_{\text{TeamBlue}},
\varsigma^{\alpha}_{\text{TriageTent}}
\},
\]
a direct call to \texttt{split(H, b\_Medical)} in Hora yields:
\[
\pi_{b_{\text{Medical}}}(H)
=
\{
\varsigma^{\alpha}_{\text{IncidentA}},
\varsigma^{\alpha}_{\text{TeamBlue}},
\varsigma^{\alpha}_{\text{TriageTent}}
\},
\]
because all three hypersimplices carry $b_{\text{Medical}}$ or inherit it through percolation. Identity is preserved automatically; no reconstruction is required.

Operators are encoded as inline procedures following the same semantics as their formal definitions. The mechanisation of \texttt{prune} and the clarified \texttt{split} completes the algebra of composition. Scoped boundary behaviour---percolation, commutation, and projection---is handled uniformly and aligns with identity-preserving locality found in abstract frameworks such as sheaf theory~\cite{curry2014sheaves,maclane1992sheaves}.

\subsection{General Algorithmic Framework}

All operator algorithms follow a common pattern:
\begin{enumerate}[leftmargin=*]
	\item validate vertices---enforce A1 and respect explicit anti-vertices introduced under A2;
	\item validate hypersimplices---check relation compatibility (C1), aggregation type (A3, C2), and preserve boundary tags (A5, C4);
	\item apply the operator’s decision table elementwise in sequence (C6);
	\item reapply A1--A5 and C1--C7 after each insertion or removal to guarantee closure (C7).
\end{enumerate}

\paragraph{Micro-example.}
For ordered input
\[
[\varsigma^{\alpha}_{\text{TeamBlue}},\varsigma^{\alpha}_{\text{IncidentA}}],
\]
and operator \texttt{merge}, the algorithm:
\[
H_0 = \emptyset,
H_1 = \{ \varsigma^{\alpha}_{\text{TeamBlue}}\},
H_2 = H_1  \sqcup \{ \varsigma^{\alpha}_{\text{IncidentA}}\}
\]
yields a deterministic final hypernetwork regardless of tool implementation, because every step applies A1--A5 and C1--C7 explicitly. Reordering the input gives a different but still valid $H_2$, confirming controlled nondeterminism rather than tool-specific behaviour.

\subsection{Insert Algorithm}

The \textsc{Insert} routine realises the merge decision table and provides the skeleton for all operator implementations. During construction, \textsc{Insert} adds elements one at a time to the empty hypernetwork; merge generalises this to combining full structures. Name clashes, aggregation-type mismatches, and incompatible relation symbols are rejected; compatible hypersimplices are unified; otherwise new elements are created. Bidirectional containment is maintained automatically: whenever a hypersimplex lists participants in its \texttt{simplex}, reciprocal \texttt{partOf} links are added, giving closure by construction. The full decision tables and procedural specification are given in Appendix~B.

\paragraph{Micro-example.}
Calling \textsc{Insert} on
\[
\varsigma^{\alpha}_{\text{TeamBlue}} = \langle \textit{Commander, Medic1, Medic2} ; R_{\text{Team}} \rangle
\]
first inserts the vertices (A1), then the simplex, then the hypersimplex. A second call with
\[
\langle \textit{Commander, Medic1, Medic2}; R_{\text{Team}} \rangle
\]
is ignored by idempotence, as the decision table classifies it as \emph{identical hypersimplex}. A call with
\[
\langle \textit{Commander, Medic1, Medic2, Extra}; R_{\text{Team}} \rangle
\]
is rejected because $R_{\text{Team}}$ fixes arity, enforcing C1 (relation compatibility) and A4 (role binding).

\subsection{Summary}

Mechanisation reduces to a simple pattern: apply the operator’s decision table, enforce closure at each step, and preserve order. The algebra of Section~\ref{sec:structure-and-composition} is therefore not merely conceptual but fully executable. By grounding execution in axioms and explicit operator conditions, HT provides operators that are deterministic, semantics-preserving, and straightforward to implement. Scoped boundary behaviour follows the same pattern, ensuring that structural composition remains valid even under overlapping or heterogeneous boundaries.

\section{Worked Example: A Multilevel Emergency Response System}
\label{sec:examples}

This flagship example demonstrates how typed $n$-ary relations, boundaries, identity preservation, and semantics-preserving operators support structures that conventional formalisms cannot represent without reconstruction, hierarchy-forcing, or loss of semantics. Emergency response systems contain overlapping roles, multilevel dependencies, and cross-agency coordination. HT captures these directly, without tool-dependent semantics.

\subsection{System Structure}

An incident is modelled as a typed $n$-ary $\alpha$-aggregation:
\[
\begin{aligned}
	\varsigma^{\alpha}_{\text{IncidentA}} = \langle & \textit{Commander, UnitBlue, UnitRed, Casualties, HospitalX, Status}; \\
	& R_{\text{Incident}} ; b_{\text{Ops}}, b_{\text{Medical}}, b_{\text{IncidentA}} \rangle.
\end{aligned}
\]

The same \emph{Commander} participates simultaneously across levels and roles:
\[
\begin{aligned}
	\varsigma^{\alpha}_{\text{TeamBlue}}
	&= \langle \textit{Commander, Medic1, Medic2}; R_{\text{Team}} ; b_{\text{Ops}}, b_{\text{Medical}} \rangle, \\[2mm]
	\varsigma^{\beta}_{\text{Rank}}
	&= \{ \textit{Commander, Deputy, DutyOfficer}; R_{\text{Rank}} ; b_{\text{Ops}} \}.
\end{aligned}
\]

This shows $\alpha$ (part--whole) and $\beta$ (taxonomic) semantics coexisting without collapsing them into binary associations or role conventions.

A triage tent is simultaneously equipment and a clinical micro-facility:
\[
\varsigma^{\alpha}_{\text{TriageTent}} = \langle \textit{TentFrame, Supplies, Medic1}; R_{\text{Facility}} ; b_{\text{Logistics}}, b_{\text{Medical}} \rangle.
\]
Its identity is preserved across logistics and clinical subsystems, illustrating heterarchy.

\subsection{Boundaries and Scoped Projection}

Operational subsystems are delimited by boundaries:
\[
b_{\text{Ops}}, \qquad b_{\text{Medical}}, \qquad b_{\text{Logistics}}, \qquad b_{\text{IncidentA}}.
\]

A percolating medical projection extracts clinically relevant structure:
\[
B(H, b_{\text{Medical}}) = \pi_{b_{\text{Medical}}}(H).
\]

Explicitly:
\[
\begin{aligned}
	\pi_{b_{\text{Medical}}}(H)
	= \{ & \textit{Medic1}, \textit{Medic2}, \textit{Casualties}, \textit{HospitalX}, \\
	& \varsigma^{\alpha}_{\text{TriageTent}},
	\varsigma^{\alpha}_{\text{TeamBlue}},
	\varsigma^{\alpha}_{\text{IncidentA}} \}.
\end{aligned}
\]

Identity is preserved: \emph{Commander} remains the same structural element across all scopes.

\subsection{Structural Composition}

Two subsystems are developed independently:
\[
H_{\text{Ops}},\qquad H_{\text{Clinical}}.
\]

\paragraph{Merge.}
The operator recombines them into a coherent whole:
\[
H_{\text{Merged}} = H_{\text{Ops}}  \sqcup  H_{\text{Clinical}}.
\]

Concretely:
\[
\begin{aligned}
	H_{\text{Ops}} &= 
	\{
	\varsigma^{\alpha}_{\text{IncidentA}},
	\varsigma^{\alpha}_{\text{TeamBlue}}
	\},\\[1mm]
	H_{\text{Clinical}} &= 
	\{
	\varsigma^{\alpha}_{\text{IncidentA}},
	\varsigma^{\alpha}_{\text{TriageTent}},
	\varsigma^{\beta}_{\text{PatientClass}}
	\},\\[1mm]
	H_{\text{Merged}} &=
	\{
	\varsigma^{\alpha}_{\text{IncidentA}},
	\varsigma^{\alpha}_{\text{TeamBlue}},
	\varsigma^{\alpha}_{\text{TriageTent}},
	\varsigma^{\beta}_{\text{PatientClass}}
	\}.
\end{aligned}
\]

\paragraph{Meet.}
\[
H_{\text{Common}} = H_{\text{Ops}}  \sqcap  H_{\text{Clinical}},
\]
revealing shared elements such as \emph{Commander}, \emph{Medic1}, and \emph{HospitalX}.

\paragraph{Difference.}
\[
H_{\text{OpsOnly}} = H_{\text{Ops}} / H_{\text{Clinical}}.
\]

\paragraph{Prune.}
Standing down \texttt{UnitRed} yields:
\[
H' = H_{\text{Merged}} \,\ominus\, \{ \textit{UnitRed} \}.
\]
giving an anti-vertex under A2:
\[
\begin{aligned}
	\varsigma^{\alpha}_{\text{IncidentA}} =
	\langle
	&\textit{Commander, UnitBlue},\, \medtilde{\textit{UnitRed}},\, \textit{Casualties},\, \textit{HospitalX},\, \textit{Status} ;\\
	&R_{\text{Incident}} ; b_{\text{Ops}},\, b_{\text{Medical}},\, b_{\text{IncidentA}}
	\rangle.
\end{aligned}
\]
Closure is preserved; no dangling structure remains.

\subsection{Significance}

\paragraph{Multilevel and heterarchical structure.}
Typed hypersimplices represent part--whole, taxonomic, and cross-level dependencies directly, without forcing hierarchy or reproducing semantics via inference.

\paragraph{Scoped, identity-preserving views.}
Boundaries generate subsystem projections in which identity is preserved, enabling analysis and recomposition across overlapping contexts.

\paragraph{Semantics-preserving composition.}
Merge, meet, difference, prune, and split operate on typed $n$-ary structure with guaranteed closure and determinism, unlike graph-based, ontology-based, or diagram-centric approaches.

\subsection{Comparison with Established Formalisms}

\begin{center}
	\begin{tabular}{lcccc}
		\toprule
		\textbf{Property} & \textbf{UML/SysML} & \textbf{OWL/RDF} & \textbf{Hypergraphs} & \textbf{HT} \\
		\midrule
		Typed $n$-ary relations & No & No & No & Yes \\
		Identity across boundaries & No & No & No & Yes \\
		Explicit $\alpha/\beta$ semantics & No & No & No & Yes \\
		Scoped projection & No & No & No & Yes \\
		Cross-level structure & Limited & No & No & Yes \\
		\bottomrule
	\end{tabular}
\end{center}

This worked example shows why HT functions as a general structural kernel: explicit typed structure; principled scoping; identity stability; and deterministic, semantics-preserving composition across heterogeneous subsystems.

\section{Discussion}
\label{sec:discussion}

This paper has established Hypernetwork Theory (HT) as a semantics-preserving and mechanisable framework for multilevel $n$-ary modelling. By integrating conceptual motivation, refined notation, axioms, structural operators, and executable algorithms into a single account, the theory becomes both expressive and operational. Typed aggregation, explicit relation binding, scoped boundaries, optional exclusion, and ordering together provide a minimal core that guarantees well-formedness, identity preservation, and deterministic behaviour.

\subsection{Generality}
The constructs are domain-independent. Typed $\alpha/\beta$ aggregation, ordered role alignment, and scoped boundaries arise across engineering systems, organisational structure, socio-technical arrangements, biological networks, and formal ontologies. Comparable requirements appear in hypergraph and higher-order approaches~\cite{RN363,RN365,RN197}, ontology formalisms~\cite{RN688,RN666,RN664,brickley2004rdf}, and abstract mathematical treatments of locality and composition~\cite{curry2014sheaves,goguen1991manifesto,maclane1992sheaves,spivak2014ct}. HT therefore provides a notation-independent account of semantically explicit relational structure rather than a domain-specific language.

\subsection{Interoperability}
Because semantics are embedded directly in its primitives—rather than inferred from diagrams, tool conventions, or inference engines—HT offers a reproducible basis for interoperability. Graphs and hypergraphs capture connectivity but not typed relational meaning~\cite{RN363,RN197}; ontologies reconstruct $n$-ary relations indirectly~\cite{RN688,brickley2004rdf}; UML, SysML, and MOF-based frameworks remain diagram-centric and binary at their core~\cite{delligatti2013sysml,eriksson2003uml,siegel2014object}. HT instead makes identity, typing, role alignment, boundaries, and explicit exclusion first-class, enabling heterogeneous models to be aligned without ad hoc reconciliation. Scoped boundaries preserve subsystem integrity, and the operator algebra provides semantics-preserving integration.

\subsection{Mechanisation}
Mandatory typing, ordered roles, identity preservation, and boundary scoping make mechanisation direct. Operators are defined by explicit decision tables, and algorithms enforce A1--A5 and C1--C7 at each step, ensuring closure, determinism, and reproducibility. Merge, meet, difference, prune, and split can be executed without heuristics, and boundary-based projection $B(H,b)=\pi_b(H)$ aligns scope with structural closure. Mechanisation is therefore a property of the core constructs rather than an external extension.

\subsection{Limitations}
Several restrictions remain intentional. Composition presumes matching names for relations and vertices; semantic alignment across vocabularies lies outside the framework. Ordering across hypersimplices can yield scope-sensitive variants, reflecting meaningful alternatives rather than inconsistency. Boundaries, in their minimal form, support scope but not the full locality structures considered in abstract settings~\cite{curry2014sheaves}. Anti-vertices remain optional, are introduced explicitly (by the modeller or by operators when required), and do not participate directly in closure. These constraints keep the kernel precise and tractable.

\subsection{Implications}
By reconciling openness with rigour—openness at the level of scope, closure at the level of rules—HT avoids the fragility of syntactic integration and the indeterminacy of inference-based formalisms. It provides a principled basis for decomposition and recomposition, preserves semantics under editing and projection, and supports subsystem modularity across heterogeneous structures. The axioms and operators therefore supply a coherent algebra for structural composition.

\subsection{Refinements}
The account sharpens earlier relational and higher-order modelling approaches by making aggregation typing, role ordering, scoped boundaries, and identity preservation explicit. These refinements support deterministic structural operators, identity-preserving projection, and coherent subsystem views over a single immutable backcloth. They also provide a structural basis for developing constraints, dynamics, and interpretive layers without modifying the core formalism.

\medskip
In sum, HT offers a unified, mechanisable, and semantically explicit foundation for $n$-ary and multilevel modelling. The structural kernel developed here supports typed structure, deterministic composition, scoped behaviour, and principled extension towards constraints, dynamics, and interpretation.

\section{Related Work}
\label{sec:related}

Work on formal modelling spans graphs and hypergraphs, ontologies, UML and MOF-based languages, Model-Driven Engineering (MDE), multi-level modelling, architectural description languages (ADLs), and abstract mathematical frameworks such as category theory and sheaf theory. Each provides valuable insights, but none offers typed $n$-ary relations, multilevel expressiveness, scoped boundaries, and mechanisable composition within a single coherent framework.

Graphs and hypergraphs capture connectivity and higher-order structure, including multi-layer and multiplex variants~\cite{RN363,RN385,RN362,RN361}. These approaches remain analytically powerful but treat relations as untyped sets: aggregation semantics, role alignment, and part--whole meaning remain external. The term \emph{hypernetwork} also appears in ecological and biological modelling~\cite{joslyn2020hypernetwork}, but this differs fundamentally from HT, which is grounded in typed $n$-ary relations with explicit semantics and ordering.

Relational and logical frameworks---including first-order logic, relational databases, and Description Logics~\cite{RN666,RN664}---treat relations extensionally but do not provide typed $n$-ary constructs with ordered roles. RDF/OWL reification patterns~\cite{RN688,brickley2004rdf} reconstruct $n$-ary structure indirectly, obscuring semantics and preventing identity-preserving projection. These approaches lack algebraic composition and do not support role-sensitive or boundary-aware structure.

Object Orientation~\cite{booch2008object} conflates part--whole and taxonomic semantics, leaving composition informal and tool-dependent. UML and SysML~\cite{delligatti2013sysml,eriksson2003uml} retain binary associations and hierarchical packaging, with semantics driven by diagram conventions. MOF~\cite{siegel2014object} imposes fixed layering. MDE extends these notations with transformation pipelines (ATL, QVT~\cite{jouault2008atl,kurtev2006qvt,omg2016qvt}) and tooling such as EMF~\cite{steinberg2008emf}, but inherits binary meta-models and lacks semantics-preserving composition. Models remain diagram-centric, with behaviour tied to tool interpretation rather than explicit typed relational structure.

Multi-level modelling frameworks address cross-level structure via clabject mechanisms, meta-level induction, or potency rules. These approaches support flexible classification but do not provide typed $n$-ary relations, explicit aggregation semantics, or identity-preserving boundary projection. HT instead treats cross-level and heterarchical references as first-class and structurally coherent.

Architectural description languages (ADLs) such as ACME, AADL, and Wright offer structural primitives but rely on connectors, ports, and configurations that encode semantics implicitly~\cite{garlan2000acme, feiler2006architecture, allenformal}. They do not support typed $n$-ary relations, boundary-scoped projection, or deterministic structural operators. Composition is expressed syntactically or behaviourally, not through a semantics-preserving relational algebra.

Graph-transformation and rule-based frameworks provide structured rewriting but assume graph-based primitives and binary links. Algebraic approaches such as DPO/SPO offer precise transformation rules but do not embed typed $n$-ary relations, aggregation semantics, or identity-preserving scoping~\cite{ehrig2006fundamentals}. These frameworks complement HT but do not substitute for its typed relational foundation.

Abstract mathematical frameworks, including category theory~\cite{goguen1991manifesto,spivak2014ct} and sheaf theory~\cite{curry2014sheaves,maclane1992sheaves}, formalise composition, locality, and gluing but do not define typed $n$-ary relations, boundary assignment, identity-preserving projection, or deterministic structural editing. Systems-of-systems engineering~\cite{jamshidi2008sos,maier1998architecting} studies subsystem boundaries, but these are descriptive rather than structural and lack algebraic guarantees for reproducible recomposition.

Against this landscape, HT is distinctive in embedding semantics directly in its constructs. Typed $n$-ary relations are explicitly declared as $\alpha$ (conjunctive) or $\beta$ (taxonomic); aggregation and typing are preserved under projection; identity is maintained across all boundary-scoped submodels; and structural operators act deterministically under explicit rules. These properties position HT as a fully specified, mechanisable formalism capable of supporting multilevel, heterogeneous, and scoped systems modelling.

\section{Conclusion and Future Work}
\label{sec:conclusion}

This paper has presented Hypernetwork Theory (HT) as a single, structurally complete and mechanisable formalism. It unifies the conceptual motivation for typed $n$-ary relations, the refined notation and axioms, and a full algebra of structural composition. The result is a semantics-preserving, identity-stable, and executable framework for multilevel and heterogeneous systems modelling.

\paragraph{Contributions.}
This paper has delivered four main advances.
\begin{enumerate}[leftmargin=*]
	\item \textbf{A clarified relational foundation.} Typed $n$-ary relations, explicit role ordering, and $\alpha$/$\beta$ aggregation provide a precise semantic base in which diagrams become optional views rather than the primary artefact.
	\item \textbf{A formally grounded notation.} Vertices, simplices, hypersimplices, relation binding, boundaries, and ordering have been given exact definitions and unified under axioms~A1--A5, ensuring that every construct remains explicit and well-formed.
	\item \textbf{A complete structural operator algebra.} Merge, meet, difference, prune, and split have been defined through validity conditions~C1--C7 and deterministic decision tables, giving a semantics-preserving account of structural composition.
	\item \textbf{An executable framework.} Deterministic algorithms realising the operator algebra demonstrate that HT can be implemented directly and used as a reliable basis for construction, editing, and comparison.
\end{enumerate}

The formalisation of typed relations, ordered roles, relation binding, modeller-supplied exclusion, and boundary tagging provides a clear and explicit semantic foundation. The axioms guarantee well-formedness and closure under rules while maintaining an open-world stance at the level of scope. Ordering is made precise: simplicial order fixes participant--role alignment, and insertion order introduces controlled but mechanisable nondeterminism.

On this foundation, the structural operators provide deterministic procedures for merge, meet, difference, prune, and split, each grounded in explicit decision tables and operator conditions. These operators extend set-theoretic intuitions while preserving semantic typing, boundary tags, role alignment, and identity. The algorithms enforce A1--A5 and C1--C7 throughout, yielding a reproducible execution layer.

A distinctive contribution of the unified theory is its reconciliation of the Open World Assumption with rule-based closure: models remain extensible, yet every operation returns a valid hypernetwork. This contrasts with diagram-centric formalisms grounded in the Closed World Assumption and ontology-based approaches that permit openness without structural closure. HT embeds semantics directly in its primitives and preserves identity across all structural operations.

This structural kernel provides a stable base for future developments. A companion paper develops the full boundary calculus, including projection, percolation, overlap, and scoped composition. Beyond this, future work spans \emph{constraints} (scoped invariants over projections), \emph{dynamics} (time-indexed backcloths and structural evolution), and \emph{interpretation} (connections with external mathematical and engineering frameworks). The Hora environment offers a reproducible implementation baseline for these directions.

In summary, this paper establishes the structural foundation of Hypernetwork Theory as a coherent, mechanisable, and semantically explicit modelling framework. It provides a principled basis for multilevel structure, semantics-preserving composition, and scoped behaviour, and prepares the ground for constraints, dynamics, interpretation, and the boundary calculus as the next stages of development.
	
\printglossary

\section*{Acknowledgments}
Much of the work presented here traces its origins to the pioneering idea of Hypernetworks first introduced by my supervisor, Jeff Johnson.  His insight --- that systems might be understood not as collections of objects but as structured aggregations of relations --- laid the foundation for all subsequent developments.  Without that conceptual breakthrough, neither the refinements presented in this paper nor the wider programme of research into mechanisable, multilevel, and \nar{} modelling would have been possible.  I am deeply grateful for his guidance and support throughout the development of this work.

I would also like to thank my examiners, Professor Liz Varga and Dr Amel Bennaceur, for their thoughtful engagement with my thesis and for the constructive feedback that helped to strengthen its contribution.

The author acknowledges that early draft material for this paper was generated with the assistance of ChatGPT, based on material from the author’s doctoral thesis.  These drafts were subsequently reviewed, validated, and substantially revised by the author.  All final interpretations, arguments, and conclusions are the author’s own, and all references and citations have been independently selected and verified by the author.

\appendix

\setcounter{section}{1}
\renewcommand{\thesection}{A}

\section*{Appendix A: Proofs of Axioms and Operator Properties}
\addcontentsline{toc}{section}{Appendix A: Proofs of Axioms and Operator Properties}

The main text is self-contained. The axioms (A1--A5), operator conditions (C1--C7), and all structural behaviour of hypernetworks can be understood without reference to this appendix. For completeness, this appendix consolidates the key lemmas and proofs establishing that hypernetworks remain valid, semantics-preserving, and deterministic under all admissible modelling actions.

\medskip
\noindent\textbf{How to read this appendix.} Each lemma states a formal guarantee about hypernetworks or one of the structural operators. A brief gloss follows each proof to give intuition. Together they show that once a hypernetwork is valid, it remains valid under construction, projection, editing, and all operator applications.

\begin{lemma}[OWA and Closure in Construction]
	For any hypernetwork $H=\{\varsigma_1,\ldots,\varsigma_n\}$ constructed by successive insertion of hypersimplices under axioms~A1--A5, $H$ is simultaneously open to extension and closed under rules.
\end{lemma}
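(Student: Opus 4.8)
The plan is to prove the two halves of the statement separately and then show that they are compatible rather than contradictory, since ``open'' and ``closed'' here refer to orthogonal properties: openness is a property of the \emph{transition relation} on construction states, whereas closure is a \emph{per-state invariant}. I would proceed by induction on the number $n$ of successive insertions, taking the well-formedness predicate ``$H$ satisfies A1--A5'' as the inductive invariant, and reading ``open'' as ``a further admissible insertion always exists''.

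For the closure half, the base case is the empty hypernetwork $H_0=\emptyset$, which satisfies A1--A5 vacuously. For the inductive step, I would assume $H_k$ satisfies A1--A5 and consider inserting $\varsigma_{k+1}$ via the \Insert{} routine that realises the merge decision table (Section~\ref{sec:mechanisation}). The argument is then a case analysis over the decision-table branches, checking that each re-establishes every axiom: \Insert{} rejects name clashes or unifies on identical identifiers (A1), preserves modeller-supplied anti-vertices without generating new ones (A2), rejects aggregation-type mismatches (A3), rejects incompatible relation symbols while fixing arity and ordered roles (A4), and carries all boundary tags through unchanged (A5). Because each branch re-applies these checks and maintains reciprocal \texttt{partOf} links by construction, $H_{k+1}$ again satisfies A1--A5, closing the induction.

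For the openness half, I would observe that A1--A5 constrain only the \emph{form} of each construct, never the cardinality or content of the collection; no axiom asserts maximality or forbids adding a further well-formed hypersimplex. Under the Open World Assumption, the absence of some $\varsigma$ means ``not yet stated'', not ``excluded'', so any valid $H$ admits at least one admissible insertion yielding a valid $H'$. The two halves then coexist precisely because they quantify over different objects: the invariant holds at every state, while the extension step remains available at every state, so no state is terminal and no state is ill-formed.

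The main obstacle I anticipate is the \emph{name-lift} (replacement) case flagged in the Closure paragraph, where a bare vertex already present in $H_k$ shares an identifier with an incoming hypersimplex. Here insertion is not pure monotone union: a naive set union would either duplicate an identifier (violating A1) or leave behind a dangling, ill-typed bare vertex. I would therefore need to verify that the replacement branch of the decision table deletes the bare vertex, installs the hypersimplex in its place, and repairs all reciprocal containment links, so that no reference is left dangling and no ill-typed structure survives---thereby preserving A1--A5 even in this non-additive case. Establishing that this single branch closes cleanly is what makes the inductive invariant genuinely preserved across \emph{all} decision-table outcomes, and hence what ties the closure and openness halves together into the ``simultaneously open and closed'' conclusion.
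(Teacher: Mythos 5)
Your proposal is correct and follows essentially the same route as the paper's own proof: closure because each insertion step re-enforces A1--A5 (the paper phrases this as ``each merge step enforces uniqueness, typing, relation binding, and boundary scoping''), and openness because nothing in the axioms forbids adding a further well-formed hypersimplex. Your version simply makes explicit the induction, the decision-table case analysis, and the name-lift branch that the paper's two-sentence argument leaves implicit.
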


\begin{proof}
	Openness follows because construction is defined as iterative merge, and any new hypersimplex satisfying A1--A5 may be added. Closure follows because each merge step enforces uniqueness, typing, relation binding, and boundary scoping. Thus $H$ remains valid at every stage.
\end{proof}

\noindent\emph{Gloss.} New structure can always be added, but every step preserves validity.

\begin{lemma}[Closure with explicit anti-vertices (modeller- or operator-introduced under A2)]
	Let $H$ be valid. Replacing a vertex $v$ with its anti-vertex $\tilde{v}$ yields
	$H'=(H\setminus\{v\})\cup\{\tilde{v}\}$, and $H'$ is valid under A1--A5.
\end{lemma}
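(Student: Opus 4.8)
The plan is to verify that the single-vertex polarity flip $v \mapsto \tilde{v}$ leaves each of the five axioms intact, since validity of $H'$ is by definition the conjunction of A1--A5 holding of $H'$. The operation changes exactly one element of the collection and, wherever $v$ occurred as an ordered participant, substitutes $\tilde{v}$ into the same position; no identifier, relation symbol, aggregation tag, or boundary tag is created or destroyed. I would therefore proceed axiom by axiom, taking the validity of $H$ as the standing hypothesis at each step.

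First I would check A1. The anti-vertex $\tilde{v}$ is tied to the same globally unique identifier as $v$ and merely records its exclusion; since $v$ was unique in $H$ and is removed, no two elements of $H'$ share an identifier, so uniqueness is preserved. Second, A2 is immediate: the replacement is precisely the explicit-exclusion move that A2 sanctions, whether performed by the modeller or by an operator's decision table. Third, A3 concerns only hypersimplices, each of which retains its declared $\alpha$ or $\beta$ type; flipping the polarity of a $0$-simplex touches no aggregation tag, so typing is unchanged.

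Next I would treat A4 and A5 together, since both concern attributes attached to hypersimplices and their participants. For any hypersimplex in which $v$ appeared, $\tilde{v}$ occupies the identical role slot, so the arity and ordered role map fixed by $R$ are preserved; the relation binding of every hypersimplex is therefore intact, giving A4. Boundaries are scoping devices that do not alter identity, and $\tilde{v}$ inherits exactly the scope of $v$; no boundary tag is added or removed, so A5 holds. Taken together, these establish that $H'$ satisfies A1--A5 and is valid.

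The step I expect to need the most care is the treatment of occurrences of $v$ inside hypersimplices, since the set-level notation $(H\setminus\{v\})\cup\{\tilde{v}\}$ understates what happens to participant lists. The apparent worry is whether an anti-vertex sitting in a required role breaks well-formedness. I would resolve this by observing that A1--A5 impose no prohibition on anti-vertices occupying roles: structural validity is a statement about identity, typing, binding, and scope, not about the presence or absence of a particular value. Any relation-specific rejection of an excluded role is the responsibility of the prune decision table downstream, not of this atomic replacement; hence the replacement as stated preserves exactly the structural invariants A1--A5.
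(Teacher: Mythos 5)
Your proposal is correct and follows essentially the same route as the paper's own (much terser) proof: an axiom-by-axiom check that the polarity flip is neutral with respect to uniqueness (A1), is exactly the exclusion move licensed by A2, and leaves typing, relation binding, and boundary scoping untouched (A3--A5). The only cosmetic divergence is that the paper treats $\tilde v$ as a distinct identifier under A1 while you treat it as sharing $v$'s identifier; since $v$ is removed, uniqueness holds either way, and your extra remark deferring role-specific rejection of excluded participants to the prune decision table is consistent with the paper's treatment of prune.
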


\begin{proof}
	By A1, $\tilde{v}$ is a distinct identifier. By A2, anti-vertices are optional and modeller-supplied. Typing, relation binding, and boundaries are unaffected; hence $H'$ is valid.
\end{proof}

\noindent\emph{Gloss.} Marking exclusion never breaks structure.

\begin{lemma}[Closure under Boundary Projection]
	Let $H$ be valid and $b\in B(H)$ a boundary. Then $B(H,b)=H_b$ is valid under A1--A5.
\end{lemma}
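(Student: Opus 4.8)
The plan is to verify each of A1--A5 in turn for $H_b = \pi_b(H)$, exploiting the fact that projection is a pure \emph{selection} operation. By A5 and the boundaries subsection, $\pi_b$ neither renames, retypes, rebinds, nor creates elements; it retains exactly those elements of $H$ that carry $b$ (together with their descendants, when $b$ percolates) and leaves every retained element's identifier, aggregation type, relation symbol, ordered roles, and boundary tags unchanged. First I would record this defining property explicitly, together with the monotonicity fact $\pi_b(H) \isSub H$ noted in \cref{sec:algebraic}, so that the retained elements form a subset of those of $H$ under identity-preserving inclusion.

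Given this, the axioms divide into an easy family and one genuine obligation. A1 follows because the retained identifiers are a subset of the globally unique identifiers of $H$ and projection preserves identity (A5), so no collision or duplication can arise. A3 follows because each retained hypersimplex keeps its declared $\alpha/\beta$ type verbatim. A2 follows because projection never generates anti-vertices and preserves any modeller-supplied anti-vertex it encounters unchanged, so every anti-vertex in $H_b$ was already a valid exclusion in $H$. A5 itself follows because all boundary tags are preserved (C4) and every boundary appearing in $H_b$ is one already present in $B(H)$, retaining its percolation behaviour; no tag is altered and no identity is changed.

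The one step requiring real argument is A4, together with the underlying well-formedness demand that $H_b$ contain no dangling participants: for every $\hs \in \HS{H_b}$, each member of its ordered role-tuple under $R$ must itself lie in $H_b$. Here I would show that $\pi_b$ is \emph{downward-closed} with respect to the participant relation $\Part{\cdot}$, so that selecting a hypersimplex forces its participants into the projection. When $b$ percolates this is immediate, since percolation propagates the tag to all descendants. When $b$ is non-percolating I would appeal to the defining guarantee of A5 that projection ``yields a valid sub-hypernetwork'', i.e.\ that $\pi_b$ closes the selected set under $\Part{\cdot}$ before returning it; this is precisely what keeps arity and role alignment intact under $R$, so A4 is preserved rather than broken by truncation.

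The hard part is therefore pinning down this closure behaviour of $\pi_b$ in the non-percolating case without smuggling in material outside the boundary's scope: one must argue that including the structurally required participants of a selected hypersimplex does not drag in unrelated structure, so that $H_b$ is simultaneously scope-faithful and closed. Since the full percolation and overlap semantics are deferred to the companion boundary calculus, the cleanest route here is to treat the ``closed sub-hypernetwork'' clause of A5 as the operative definition of $\pi_b$ and to verify that selection-plus-participant-closure preserves A1--A5 uniformly; the remaining four axioms then reduce, as above, to the observation that projection changes no attribute of any element it keeps.
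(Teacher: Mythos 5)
Your proposal is correct and follows essentially the same route as the paper's proof, which likewise argues that projection merely selects tagged (or percolation-reached) elements, that identity, typing, and relation binding are inherited unchanged, and that A5 itself guarantees closure of the projection. Your version is simply a more explicit elaboration—particularly in isolating A4 and participant-closure in the non-percolating case as the one non-trivial obligation and in acknowledging that the ``valid sub-hypernetwork'' clause of A5 is doing the definitional work there—but the underlying argument is the same.
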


\begin{proof}
	All elements of $H_b$ are those tagged with $b$ (or descended via percolation). Identity, typing, and relation binding are inherited. A5 guarantees closure of the projection.
\end{proof}

\noindent\emph{Gloss.} Every boundary view is itself a valid hypernetwork.

\begin{lemma}[Ordering Determinism]
	Reordering hypersimplices in a valid hypernetwork does not affect validity or boundary projection.
\end{lemma}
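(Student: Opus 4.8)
The plan is to exploit the fact, established earlier, that a hypernetwork is \emph{extensionally} a set of hypersimplices and therefore carries no intrinsic sequence. First I would fix notation: let $H=\{\varsigma_1,\ldots,\varsigma_n\}$ be valid, and let $\rho$ be any permutation of $\{1,\ldots,n\}$, so that the reordered enumeration $[\varsigma_{\rho(1)},\ldots,\varsigma_{\rho(n)}]$ determines exactly the same underlying set $H$. The single most important point to make precise at the outset is that this lemma concerns only the \emph{external} order of the collection, not the \emph{internal} role-order of any hypersimplex: each $\varsigma_{\rho(i)}$ retains its own ordered participants under its relation symbol $R$. This separation is what reconciles the lemma with the earlier observation that simplicial order fixes role alignment under $R$ and hence cannot be permuted freely.

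The second step is to show that validity is preserved. Here I would argue that each axiom A1--A5 is a predicate over the \emph{set} $H$ and its elements, never over an enumeration of it. Uniqueness (A1) is a statement about the set of identifiers, which a permutation leaves fixed; explicit exclusion (A2), aggregation typing (A3), and relation binding (A4) are properties attached to individual hypersimplices together with their preserved internal role-order; and boundary scoping (A5), together with the boundary set $B(H)$ and the tags carried by each element, is likewise determined element-wise. Since a permutation alters neither the membership of $H$ nor any element's internal data, every axiom that held before continues to hold, so the reordered hypernetwork is valid.

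The third step is to show that boundary projection is unchanged. By definition $B(H,b)=\pi_b(H)$ is the subset of $H$ consisting of elements tagged with $b$, closed under percolation to descendants. I would appeal to the already-established lemma on closure under boundary projection and add that the selection criterion is purely structural: membership depends on the tags carried by each element, and percolation depends on the participant/containment relation encoded within the hypersimplices. Both are functions of the set $H$ alone, so $\pi_b(H)$ returns the identical subset regardless of the order in which the elements are listed.

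The step I expect to require the most care is the treatment of \emph{percolation}. Because the paper elsewhere stresses that algorithms traverse the collection in a fixed order for reproducibility, a reader might suspect that percolation is an order-dependent sweep. The key is to observe that the percolated tag set is the \emph{transitive closure} of the boundary assignment along the structural descendant relation, and such a closure is a well-defined function of the relation itself, independent of the order in which its edges are visited. I would therefore distinguish the mathematical \emph{result} of projection, which is order-invariant, from any particular algorithmic schedule used to compute it, which merely fixes intermediate states. With this distinction in place, both claims reduce to the single observation that validity and projection are invariants of the set $H$.
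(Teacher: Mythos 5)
Your proposal is correct and follows essentially the same route as the paper's proof, which argues that A1--A5 depend only on identifiers, typing, binding, and boundaries (not enumeration order) and that projection depends solely on tags; your version simply elaborates these two points, adding the useful clarifications that internal role-order is untouched by external reordering and that percolation is an order-invariant transitive closure. No gap.
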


\begin{proof}
	A1--A5 depend only on identifiers, typing, binding, and boundaries, not enumeration order. Projection depends solely on tags. Hence both remain invariant under reordering.
\end{proof}

\noindent\emph{Gloss.} Listing order does not affect the structure.

\begin{lemma}[Single-step Merge Preservation]
	Let $H$ and $K$ be valid. Successively inserting each hypersimplex of $K$ into $H$ under A1--A5 yields a valid hypernetwork.
\end{lemma}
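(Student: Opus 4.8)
The plan is to treat this as the natural extension of the construction lemma (\emph{OWA and Closure in Construction}) from an empty base to an arbitrary valid base $H$, since merge is defined as iterated single-element insertion. I would set up an induction on the hypersimplices of $K$. Writing $K$ as a sequence $\varsigma_1,\ldots,\varsigma_m$ in its fixed traversal order, I would define $H_0=H$ and $H_{i+1}=\Insert(H_i,\varsigma_{i+1})$, and claim that every $H_i$ is valid under A1--A5. The base case $H_0=H$ holds by hypothesis, so the whole argument reduces to a single inductive step: that one application of \Insert{} preserves validity.

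For the inductive step I would argue case-by-case over the branches of the \Insert{} decision table, using the validity of $K$ to guarantee that the incoming $\varsigma_{i+1}$ is itself well-formed (unique identifier under A1, declared $\alpha/\beta$ type under A3, bound relation $R$ under A4, and intact boundary tags under A5). Each branch must then be shown to re-establish A1--A5 in $H_{i+1}$: \Ignore{} leaves $H_i$ unchanged and therefore valid; \Unify{} merges two hypersimplices sharing an identifier only when $R$-compatibility (C1) and aggregation preservation (C2) hold, so typing and role alignment survive; the \Conflict{} branch rejects an incompatible insertion outright, leaving $H_i$ intact; and the fresh-insertion branch adds a new, independently well-formed element. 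Boundary tags are carried through by C4, so A5 is never disturbed, and the exhaustiveness of the decision table ensures no unhandled case escapes this check.

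The hard part will be the cross-structure clash cases, which do not arise in the construction lemma where the base is empty. Here $H$ is non-empty, so an identifier occurring in $K$ may already be present in $H_i$, and I must show that every such collision resolves deterministically without leaving dangling or ill-typed structure. The critical sub-case is the name-lift, where a bare vertex already in $H_i$ is superseded by a like-named hypersimplex arriving from $K$ (and the symmetric \Ignore{} when the hypersimplex is already present and the bare vertex adds nothing). For this I would rely on the fact that \Insert{} maintains bidirectional containment automatically---adding reciprocal \texttt{partOf} links whenever a hypersimplex lists participants---so that closure holds by construction and no reference is left unresolved after the replacement.

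A secondary subtlety, which I would address last, is insertion order within $K$: a hypersimplex may be processed before all of its participants, but since those participants also belong to the valid $K$, \Insert{} adds them (recursively, terminating at vertices) before completing the containing hypersimplex, so each intermediate $H_i$ remains closed rather than merely the final state. Finally, I would invoke the ordering-determinism lemma to observe that, although the intermediate states $H_i$ depend on the chosen traversal order of $K$, the validity of the outcome does not---closure and well-formedness are properties of identifiers, typing, binding, and boundaries, none of which is sensitive to enumeration order.
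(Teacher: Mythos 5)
Your proposal is correct and follows essentially the same route as the paper: finite induction over the elements of $K$, with the single inductive step being that one application of \textsc{Insert} re-establishes A1--A5 (the paper's own proof is just this, stated in two sentences). Your elaboration of the decision-table branches and the name-lift case is a faithful expansion of that step, with one small inaccuracy worth noting: the \Conflict{} branch does not leave $H_i$ untouched but replaces the clashing element with a \SameName{} conflict hypersimplex, which is itself well-formed, so the validity argument is unaffected.
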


\begin{proof}
	Assume $H_i$ valid. Inserting the next hypersimplex enforces A1--A5; thus $H_{i+1}$ is valid. Finite induction over the elements of $K$ completes the proof.
\end{proof}

\noindent\emph{Gloss.} Stepwise merging preserves validity.

\begin{lemma}[Iterated Merge Closure]
	Let $H_0$ and $K_1,\dots,K_m$ be valid. Define $H_{i+1}=\mathrm{merge}(H_i,K_i)$. Then all $H_i$ are valid.
\end{lemma}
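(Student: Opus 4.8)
The plan is to prove the Iterated Merge Closure lemma by straightforward finite induction on the index $i$, reducing the iterated case to the single-step guarantee already established. The statement asserts that if $H_0$ is a valid hypernetwork and $K_1,\dots,K_m$ are valid, then every $H_i$ produced by $H_{i+1}=\mathrm{merge}(H_i,K_i)$ is valid. The essential observation is that this is precisely the composition of $m$ applications of the single-step merge operation, each of which is covered by the Single-step Merge Preservation lemma stated immediately above.

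First I would fix the induction hypothesis: $H_i$ is valid under A1--A5. The base case is $H_0$, which is valid by assumption. For the inductive step, I would assume $H_i$ is valid and consider $H_{i+1}=\mathrm{merge}(H_i,K_i)$. Since $H_i$ is valid by hypothesis and $K_i$ is valid by assumption, the Single-step Merge Preservation lemma applies directly: successively inserting each hypersimplex of $K_i$ into $H_i$ under A1--A5 yields a valid hypernetwork, so $H_{i+1}$ is valid. Because $m$ is finite, finite induction closes the argument and establishes validity of all $H_i$ for $0 \le i \le m$.

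The proof is therefore almost entirely bookkeeping, and the only genuine content is imported from the preceding lemma. The one point deserving care---and the step I expect to be the main obstacle, modest though it is---is confirming that each $\mathrm{merge}(H_i,K_i)$ truly instantiates the single-step merge operation under the same axiomatic regime, i.e.\ that the operator $\mathrm{merge}$ in the recurrence is definitionally the iterated-insertion procedure governed by A1--A5 and C1--C7 rather than some weaker set-theoretic union. Once this identification is made explicit, both the validity of $K_i$ as an operand and the preservation of uniqueness (A1), typing (A3), relation binding (A4), and boundary scoping (A5) at each insertion follow from the single-step result, and no separate verification of the individual axioms is required here.

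Finally I would note, as a brief gloss in the style of the surrounding lemmas, that repeated merging of valid hypernetworks can never escape the class of valid hypernetworks, so any finite construction sequence terminates in a well-formed model. This completes the induction and the lemma.
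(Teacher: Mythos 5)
Your proof is correct and takes exactly the same route as the paper, which disposes of the lemma with ``immediate from single-step merge preservation by induction''; your version simply spells out the base case, the inductive step, and the identification of $\mathrm{merge}$ with the A1--A5-guarded iterated insertion. No gaps.
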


\begin{proof}
	Immediate from single-step merge preservation by induction.
\end{proof}

\noindent\emph{Gloss.} Long merge chains remain valid.

\begin{corollary}[Operator-neutral Closure]
	Any operator realised via A1--A5-guarded insertions yields a valid hypernetwork; any finite left fold of such operations remains valid.
\end{corollary}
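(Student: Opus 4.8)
The plan is to derive this corollary directly from the closure machinery already established, treating it as a straightforward consolidation rather than a fresh argument. The statement has two halves: first, that any operator realised via A1--A5-guarded insertions yields a valid hypernetwork; and second, that any finite left fold of such operations remains valid. I would prove the first half by appealing to the Single-step Merge Preservation lemma, which establishes exactly that inserting a hypersimplex under A1--A5 into a valid hypernetwork yields a valid hypernetwork. Since, by hypothesis, the operator is realised as a sequence of such guarded insertions, a finite induction over the insertions---with the Single-step lemma furnishing the inductive step and validity of the input hypernetwork furnishing the base case---shows the operator's output is valid.

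For the second half, I would invoke the Iterated Merge Closure lemma, whose conclusion is that a left fold of merge operations over valid inputs preserves validity at every stage. The key observation is that an arbitrary left fold of operators of the kind described is structurally identical to the iterated merge treated there: each application takes a valid accumulator $H_i$ and a valid operand, and---by the first half of this corollary---returns a valid $H_{i+1}$. A second finite induction, with base case $H_0$ valid and inductive step supplied by the first half, then closes the argument. I would phrase both inductions uniformly, since the pattern (valid input plus guarded step yields valid output) is the same in each case.

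The only genuine subtlety, and the step I would treat most carefully, is justifying that \emph{every} operator in the algebra---not merely merge---actually decomposes into A1--A5-guarded insertions. The main text asserts this for the general algorithmic framework (steps that reapply A1--A5 and C1--C7 after each insertion or removal), and the Insert routine is described as the skeleton for all operator implementations; but difference, prune, and split also perform \emph{removals}, which the insertion-only lemmas do not directly cover. I would address this by noting that the corollary's hypothesis is explicitly conditional (``realised via A1--A5-guarded insertions''), so the burden is discharged by the earlier lemmas on projection (Closure under Boundary Projection) and on anti-vertex replacement (Closure with explicit anti-vertices), which together show that the removal-style steps also map valid hypernetworks to valid ones. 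Thus the corollary is best read as a unifying statement: whichever guarded primitive an operator uses---insertion, anti-vertex replacement, or projection---the relevant preservation lemma applies pointwise, and finite induction lifts this to any left fold.
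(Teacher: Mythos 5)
Your proposal is correct and follows essentially the same route as the paper, which states this corollary without an explicit proof precisely because it is meant to follow immediately from the Single-step Merge Preservation and Iterated Merge Closure lemmas by the finite inductions you describe. Your additional care over removal-style steps (deferring to the boundary-projection and anti-vertex lemmas, and noting that the hypothesis is conditional on insertion-guarded realisation) is a reasonable sharpening but does not change the underlying argument.
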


\noindent\emph{Gloss.} If constructed via the rules, validity is preserved.

\begin{lemma}[Closure of All Operators]
	For valid $H_1,H_2$ and any operator $\odot\in\{\sqcup,\sqcap,/,\ominus,\pi\}$, the result is valid under A1--A5.
\end{lemma}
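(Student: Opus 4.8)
The plan is to prove closure by a case analysis over the five operators, reducing each to the closure results already established for additive construction, for boundary projection, and for anti-vertex replacement. The unifying tool is the \emph{Operator-neutral Closure} corollary: any operation expressible as a finite left fold of A1--A5-guarded insertions (or guarded removals that re-apply the same invariants) returns a valid hypernetwork. So for each operator $\odot$ I would exhibit a realisation of its decision table as such a guarded sequence and then invoke that corollary, rather than re-verifying A1--A5 from scratch.

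First, merge and split are immediate. For $\sqcup$, the \emph{Single-step Merge Preservation} and \emph{Iterated Merge Closure} lemmas already show that inserting each hypersimplex of $H_2$ into $H_1$ under A1--A5 yields a valid result. For $\pi$, the \emph{Closure under Boundary Projection} lemma gives validity of $\pi_b(H)$ directly in the boundary mode; the vertex-based closure mode is handled the same way by observing that role-closed selection retains exactly the elements reachable from the seed set, so identifiers, aggregation type, relation binding, and boundary tags are all inherited and no reference is left dangling.

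Next, meet and difference are selection/removal operators. For $\sqcap$ I would argue that the common structure is a sub-collection of $H_1$ (equivalently of $H_2$) whose elements retain their identifiers, aggregation type, and $R$-binding, so A1--A4 are inherited, and A5 holds because the surviving boundary tags form a subset of already-valid ones. Difference is the more delicate of the two, because removing everything in $H_2$ from $H_1$ could in principle strand a hypersimplex that still references a deleted participant; here I would lean on condition C5, re-applying closure after each removal so that any hypersimplex rendered invalid is itself removed, and present the whole operation as a guarded left fold covered by the corollary.

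The main obstacle is prune. Unlike the additive operators, $\ominus$ rewrites and cascades: each chosen participant is replaced everywhere by its anti-vertex (validity of a single such replacement being the anti-vertex closure lemma), then any hypersimplex that becomes invalid or fully pruned is deleted, and finally any element rendered unreachable is removed. The hard part is making this cascade precise: I must show that the three phases compose into one guarded sequence whose final state satisfies A1--A5, and in particular that removing unreachable elements cannot itself create new dangling references. That last point is the crux, and it holds because an element removed as unreachable is by definition referenced by no surviving hypersimplex; iterating replacement, deletion, and removal therefore terminates—the carrier is finite and strictly shrinks—at a state in which every surviving hypersimplex has valid participants, completing the induction and hence the closure claim for all of $\odot\in\{\sqcup,\sqcap,/,\ominus,\pi\}$.
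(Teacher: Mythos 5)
Your overall strategy---reducing each operator to the previously established closure lemmas and to the operator-neutral corollary---is sound and considerably more detailed than the paper's own proof, which simply asserts that the decision tables enforce A1--A5 and C1--C7 and that each operator's outcomes respect typing, binding, ordering, and boundaries. Your treatments of merge, split, and prune are fine: merge is exactly iterated insertion (the Single-step and Iterated Merge lemmas), split in boundary mode is the boundary-projection lemma, and your termination argument for the prune fixpoint (finite, strictly shrinking carrier; an element removed as unreachable is by definition referenced by no survivor) supplies a detail the paper leaves implicit.

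The gap is in your handling of meet and, to a lesser extent, difference. You characterise $\sqcap$ as returning \emph{a sub-collection of $H_1$ whose elements retain their identifiers, aggregation type, and $R$-binding, so A1--A4 are inherited}. That is not how the operator is defined: the meet decision table's \emph{Overlap} case constructs a new hypersimplex $\varsigma^\ast$ whose participant list is the rolewise intersection, and the paper's own Sub-hypernetwork Results lemma states explicitly that meet and difference may construct intersected or subtracted hypersimplices that exist in neither input. Validity therefore cannot be inherited by sub-collection; you must argue directly that $\varsigma^\ast$ (and the subtracted $\varsigma^{-}$ for $/$) satisfies A3--A4 --- in particular that the restricted participant tuple remains role-compatible with the fixed-arity relation symbol $R$, or that the decision table returns Null/Drop when it is not --- and then let the final closure-enforcement pass discharge any residual dangling references. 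With that repair the case analysis goes through and matches the paper's conclusion.
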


\begin{proof}
	Decision tables enforce A1--A5 and C1--C7. Merge/meet unify only when compatible; difference subtracts without creating identifiers; prune removes dependent structure; split selects tagged or seed-closed elements. All outcomes respect typing, binding, ordering, and boundaries.
\end{proof}

\noindent\emph{Gloss.} All operators preserve structural validity.

\begin{lemma}[Open World with Rule-based Closure]
	HT remains open to extension while remaining closed under rules for all operator applications.
\end{lemma}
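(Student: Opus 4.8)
The plan is to establish this final lemma as an immediate synthesis of the preceding results, since it asserts precisely the conjunction of two properties---openness to extension and closure under rules---that have already been proved separately for each individual construction step and operator. My approach is therefore to reduce the general claim about "all operator applications" to the two cases already handled: construction (the OWA-and-Closure lemma) and the uniform operator behaviour (the Closure of All Operators lemma). Because openness is a statement about what \emph{may} be added and closure is a statement about what \emph{must} hold after any admissible action, I would treat the two halves independently and then observe they hold simultaneously.

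First I would dispatch the openness direction by appealing to the \emph{OWA and Closure in Construction} lemma: any hypersimplex satisfying A1--A5 may be adjoined to a valid hypernetwork, because construction is defined as iterative merge and the Open World Assumption places no bar on extension at the level of scope. Nothing in the operator algebra forecloses future additions, since no operator asserts the non-existence of unstated structure (this is exactly the OWA micro-example of the main text, where absence of \emph{Wheels} is unknown rather than denied). Second I would dispatch the closure direction by invoking the \emph{Closure of All Operators} lemma, which already guarantees that for valid inputs and any $\odot \in \{\sqcup,\sqcap,/,\ominus,\pi\}$ the result respects A1--A5 and C1--C7. Composing these, for any finite sequence of operator applications interleaved with extensions, a routine induction over the sequence length---using single-step preservation at each stage---shows the intermediate and final hypernetworks remain valid while still admitting further extension.

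The honest assessment is that this lemma carries almost no independent mathematical content: it is a \emph{summary} or capstone statement that repackages the earlier closure and construction lemmas, and the only genuine work is to articulate precisely what "remains open" and "remains closed" mean so that the two are seen to be compatible rather than in tension. The main obstacle, such as it is, is conceptual rather than technical: I must make explicit that openness and closure operate on different axes---scope versus rules---so that claiming both simultaneously is not contradictory. Once that distinction is stated (openness permits extension of the model's scope under the OWA; closure forces every resulting state to satisfy A1--A5 and C1--C7), the proof is a one-line appeal to the two prior lemmas together with finite induction. I would therefore keep the proof deliberately short, flagging that its role is to consolidate rather than to introduce new reasoning.
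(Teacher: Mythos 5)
Your proposal matches the paper's own proof, which is exactly the two-part argument you describe: merge supplies openness by introducing new structure, and all operators re-enforce A1--A5 and C1--C7 at each step to supply closure. Your additional remarks on the induction over sequences and the scope-versus-rules distinction only make explicit what the paper leaves implicit; the approach is the same.
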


\begin{proof}
	Merge introduces new structure; all operators re-enforce A1--A5 and C1--C7 at each step. Thus the hypernetwork remains valid yet extensible.
\end{proof}

\noindent\emph{Gloss.} Models can grow but cannot be broken.

\begin{lemma}[Determinism]
	For identical ordered inputs, each operator yields a unique output.
\end{lemma}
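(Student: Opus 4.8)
The plan is to prove determinism by reducing each operator to the deterministic \textsc{Insert} routine of Section~\ref{sec:mechanisation}, which is governed entirely by the merge decision table, and then arguing that a deterministic elementwise procedure applied in a fixed order yields a unique output. The key observation is that ``identical ordered inputs'' fixes three things: the traversal order of each operand's hypersimplices, the internal role-order within each hypersimplex, and the order in which the two operands are combined. Once all three are fixed, there is no remaining freedom, so the claim is essentially that the decision tables are \emph{functions}---total and single-valued---rather than relations.

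First I would establish the base case: the \textsc{Insert} routine applied to a single element against an accumulator hypernetwork $H_i$ produces exactly one $H_{i+1}$. This follows from the structure of the merge decision table, whose cases (\Ignore, \Conflict, \Insert, \Unify, \Replace) partition the space of (incoming element, existing element) pairs into mutually exclusive, jointly exhaustive classes keyed only on identifier, aggregation type, relation symbol, arity, and role alignment---all of which are fixed by A1--A4 and by the fixed internal role-order. Because these keys are determined by the input and the current state, and because the table assigns exactly one action per key, the output of each step is a function of its input. I would cite Ordering Determinism to note that the \emph{validity} of the result does not depend on enumeration order, but stress that here we need the stronger fact that the \emph{result itself} is fixed once the order is fixed.

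Second I would lift this to each operator by induction over the fixed input sequence. Writing the computation as a left fold $H_{i+1}=\odot\text{-step}(H_i,x_{i+1})$ over the ordered participants or operand elements $x_1,\dots,x_k$, the inductive step is immediate: if $H_i$ is uniquely determined and each step is a function, then $H_{i+1}$ is uniquely determined. The five operators differ only in which decision table drives the step---merge and meet unify on compatible matches, difference subtracts, prune rewrites to anti-vertices and deletes unreachable structure, split selects by tag or seed-closure---but in every case the table is single-valued, so the fold is a well-defined function of the fixed input order. I would invoke the Closure of All Operators lemma to guarantee each intermediate $H_i$ is a valid hypernetwork, so the fold never leaves the domain.

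The main obstacle is pinning down precisely what ``identical ordered inputs'' must include for the claim to be true, since the paper has already emphasised that different orders give different (but valid) results. I would make explicit that determinism is a statement about a fixed order, not order-independence: the fold is a function of the ordered input, not of the underlying set. The subtle cases to check are the interacting rewrites in prune---where anti-vertex substitution, hypersimplex deletion, and unreachability removal must be shown to commute or to be applied in a fixed, specified sub-order so that no ambiguity arises---and the name-lift behaviour in merge, where a bare vertex and a later hypersimplex with the same identifier trigger \Replace; I would confirm that the decision table resolves both directions deterministically (replace in one order, \Ignore in the other), exactly as the Determinism paragraph of Section~\ref{sec:algebraic} illustrates. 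Establishing that these composite steps remain single-valued is the crux; the surrounding induction is then routine.
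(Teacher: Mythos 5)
Your proposal is correct and follows essentially the same route as the paper's own (very terse) proof: determinism holds because each operator's decision table is single-valued over mutually exclusive cases and the algorithms apply it in a fixed evaluation order, so a fixed input order yields a fixed output. Your additional scaffolding---the left-fold induction, the appeal to closure to keep intermediate states in the domain, and the explicit check that prune's fixpoint sweep and the merge name-lift remain single-valued---only makes explicit what the paper's one-line argument takes for granted.
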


\begin{proof}
	Decision tables fix evaluation order: vertices and anti-vertices first, then typing, binding, ordering, and boundaries. Algorithms therefore produce deterministic results.
\end{proof}

\noindent\emph{Gloss.} Same inputs give the same outputs.

\begin{lemma}[Boundary Projection Validity]
	For any $H$ and boundary $b$, the projection $B(H,b)=\pi_b(H)$ is valid and satisfies $B(H,b)\sqsubseteq H$.
\end{lemma}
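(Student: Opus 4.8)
The statement packages two claims: that the projection $B(H,b)=\pi_b(H)$ is a valid hypernetwork, and that it sits below $H$ in the sub-hypernetwork order, $B(H,b)\sqsubseteq H$. The plan is to dispatch validity by appeal to the earlier \emph{Closure under Boundary Projection} lemma and then to establish the ordering claim directly from the fact that projection is a selection that preserves identifiers.

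First I would recall why the projection is valid. The element set of $B(H,b)$ consists exactly of those vertices and hypersimplices of $H$ carrying the tag $b$, together with their descendants when $b$ percolates. Since projection preserves identifiers (A5) and alters neither aggregation type nor relation binding, every selected element inherits its A1 identity, its A3 typing, and its A4 role alignment unchanged; A2 anti-vertices, being modeller-supplied, are carried across untouched and none are newly created. Closure of the selected set under A5 then certifies that $B(H,b)$ satisfies A1--A5 and is therefore a valid hypernetwork.

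Next I would prove $B(H,b)\sqsubseteq H$. Because projection never manufactures new constructs but only selects existing ones, the element set of $B(H,b)$ is a subset of that of $H$. Identity is global (A1) and preserved under projection, so each selected vertex or hypersimplex is literally the same element in both structures, with identical typing, binding, internal role-order, and boundary tags. The sub-hypernetwork relation $\sqsubseteq$ therefore holds by inclusion of element sets together with coincidence of all structural attributes, giving $B(H,b)\sqsubseteq H$.

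The step requiring the most care is percolation together with closure. When $b$ percolates, a selected hypersimplex drags in its participants; I would argue that these participants already reside in $H$, since percolation propagates a tag to pre-existing descendants rather than creating anything, so the inclusion $B(H,b)\subseteq H$ is unaffected. The same observation secures well-formedness: because every included hypersimplex has all of its participants included as well, the projection is downward-closed and contains no dangling references, which is precisely what A5 demands of a closed sub-hypernetwork. For a non-percolating $b$, the closure clause of the projection still retains any participants referenced by a selected hypersimplex, so no ill-typed or dangling structure can arise. Once this closure is confirmed, both halves of the lemma follow immediately.
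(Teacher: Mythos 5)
Your proposal is correct and follows essentially the same route as the paper: the paper's own proof is the one-line observation that projection selects tagged elements already present in $H$ and that the axioms are thereby preserved, which is exactly your argument (with the earlier Closure under Boundary Projection lemma doing the validity half and set inclusion doing the ordering half). Your additional care over percolation and participant closure is a faithful elaboration of what the paper leaves implicit, not a different method.
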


\begin{proof}
	Projection selects tagged elements already present in $H$; axioms are preserved.
\end{proof}

\noindent\emph{Gloss.} Boundary projection always yields a valid sub-hypernetwork.

\begin{lemma}[Sub-hypernetwork Results]
	(a) If $H'$ is obtained from $H$ by $\ominus$ or $\pi$, then $H'\sqsubseteq H$. (b) For $\sqcap$ or $/$, $H'$ is a sub-hypernetwork only when all constructed hypersimplices already exist in the inputs.
\end{lemma}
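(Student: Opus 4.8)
The plan is to split the argument along the two operator groups, obtaining part~(a) as an almost immediate corollary of results already proved and reserving the genuine work for the characterisation in part~(b). Throughout I would fix the reading of the sub-hypernetwork relation: $H' \sqsubseteq H$ holds when every vertex and hypersimplex of $H'$ occurs in $H$ with identical identifier, aggregation type, relation binding, role order, and boundary tags. For part~(a), the split case is direct: the Boundary Projection Validity lemma already gives $\pi_b(H) \sqsubseteq H$, and for the vertex-based closure variant the selected elements are obtained by traversing participants and containers already present in $H$, so no element outside $H$ can appear. For prune I would specialise the monotonicity bound $S \subseteq S' \Rightarrow (H \ominus S') \sqsubseteq (H \ominus S)$ at $S = \emptyset$: since $\emptyset \subseteq S'$ for every deletion set and $H \ominus \emptyset = H$ is the prune identity, this yields $H \ominus S' \sqsubseteq H$ at once.

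The one delicate point in the prune case is the anti-vertex markers written in place of deleted participants. Because each $\tilde v$ is a fresh identifier under A1, a literal reading of $\sqsubseteq$ would appear to exclude it from any sub-hypernetwork of $H$. I would resolve this by appealing to the framework's standing treatment of anti-vertices (A2, C3, and the Discussion) as exclusion markers that carry no positive structure and do not participate in closure: each $\tilde v$ is derived from a pruned vertex $v \in H$ and records its marked absence rather than adding content. This is exactly the convention the assumed monotonicity and identity results already presuppose, so under it prune introduces no positive structure beyond $H$ and $H \ominus S \sqsubseteq H$ stands.

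For part~(b) I would read $\sqsubseteq$ in this same element-wise sense and contrast it with the participant-refinement reading implicit in the bound $H_1 \sqcap H_2 \sqsubseteq H_1$. The key observation is that meet and difference may emit a constructed hypersimplex whose simplex is a reduced participant tuple---such as the meet $\{Car; R_{\text{isA}}\}$ obtained from $\{Car, Van; R_{\text{isA}}\}$ and $\{Car, Truck; R_{\text{isA}}\}$---that appears verbatim in neither operand. The necessity direction then follows by unwinding the definition: if $H' \sqsubseteq H_i$ for an input $H_i$, every element of $H'$, and in particular each constructed hypersimplex, must occur in $H_i$ with matching identity and typing; contrapositively, a genuinely new constructed hypersimplex witnesses an element of $H'$ absent from $H_i$, so $H' \not\sqsubseteq H_i$. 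Sufficiency is immediate from the same definition, giving the full characterisation behind the stated ``only when''.

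The main obstacle I anticipate is not the mechanics of either part but keeping the two coexisting readings of $\sqsubseteq$ straight. Part~(a) depends on treating anti-vertices as non-additive markers so that the assumed monotonicity bound applies verbatim; part~(b) depends on the strict element-wise reading under which a reduced hypersimplex counts as new structure. The care required is to present part~(b) as sharpening the looser refinement bounds---pinpointing that meet and difference yield literal sub-hypernetworks precisely when no reduced hypersimplex is constructed---rather than as contradicting them.
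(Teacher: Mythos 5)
Your proposal is correct and reaches the same conclusions, but it routes part~(a) differently and is in one respect more careful than the paper. The paper's proof is a two-line assertion: prune and split ``do not create new hypersimplices'' (hence $H'\sqsubseteq H$), while meet and difference ``may construct intersected or subtracted hypersimplices'' not present in the inputs. You instead derive the prune case by composing two other appendix lemmas --- specialising the restricted monotonicity bound $S\subseteq S' \Rightarrow (H\ominus S')\sqsubseteq(H\ominus S)$ at $S=\emptyset$ and invoking $H\ominus\emptyset=H$ --- and the split case by citing the Boundary Projection Validity lemma plus the closure traversal. That buys you a proof assembled from already-established results rather than a fresh assertion, at the cost of inheriting whatever imprecision those lemmas carry. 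More importantly, you flag a genuine subtlety the paper's proof silently passes over: prune does not merely delete, it \emph{rewrites} surviving hypersimplices by substituting anti-vertices for pruned participants, so under the strict element-wise reading of $\sqsubseteq$ (identical $R$, type, roles, and tags, as in Table~B.7) the rewritten hypersimplex is arguably not ``identical in $H$''. Your resolution --- treating $\tilde v$ as a non-additive exclusion marker per A2, C3, and the Discussion, so that prune adds no positive structure --- is a defensible convention and is consistent with how the paper's own monotonicity and identity lemmas must be read, but it is a convention the paper never states in its proof of this lemma. Part~(b) matches the paper's argument exactly, with the necessity/sufficiency unwinding made explicit where the paper leaves it implicit.
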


\begin{proof}
	(a) prune and split do not create new hypersimplices. (b) meet/difference may construct intersected or subtracted hypersimplices; these may not exist in $H$ or $K$.
\end{proof}

\noindent\emph{Gloss.} prune/split give subsets; meet/difference only sometimes do.

\begin{lemma}[Idempotence]
	For any $H$, $H\sqcup H=H$ and $H\sqcap H=H$.
\end{lemma}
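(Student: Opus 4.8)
The plan is to reduce both equalities to the behaviour of the merge and meet decision tables on \emph{identical} elements, exploiting the fact that the two copies of $H$ share every vertex and hypersimplex with matching identifier (A1), aggregation type (A3), relation symbol and ordered roles (A4), and boundary tags (A5). Nothing genuinely new is ever presented to an operator, so in each case the decision table should select the ``do nothing but retain'' branch.

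First I would treat merge. By the Single-step Merge Preservation lemma and the \textsc{Insert} characterisation established earlier, $H \sqcup H$ is computed by successively inserting each element of the second copy of $H$ into the first. Fix an enumeration $\varsigma_1,\ldots,\varsigma_n$ of the elements of $H$ and consider inserting $\varsigma_i$ into $H$. Since $\varsigma_i$ is already present in $H$ with identical identifier, type, relation symbol, ordered roles, and boundary tags, the decision table classifies it as an identical hypersimplex, triggering the \Ignore{} branch exactly as in the \textsc{Insert} micro-example; the same holds for bare vertices under the A1 uniqueness check. Each insertion therefore adds no structure, and a finite induction over $i$ gives $H \sqcup H = H$.

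Next I would treat meet. Meet retains exactly the common structure of its operands while respecting $\alpha/\beta$ semantics. Every element $\varsigma \in H$ matches itself under the meet decision table: relation symbols agree (C1), aggregation types agree (C2), and the participant tuples coincide, so the retained participants are the \emph{full} tuple rather than a proper sub-tuple. Since no element of $H$ is absent from the second copy and meet constructs no fresh identifiers, the common structure is all of $H$, giving $H \sqcap H = H$.

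The main obstacle I anticipate is the careful verification of the decision-table cases rather than the logical structure, which is routine induction. For merge this means confirming that the \Ignore{} classification applies uniformly, including to nested hypersimplices whose participants are themselves hypersimplices, so that idempotence propagates recursively through the containment hierarchy. For meet the corresponding subtlety is that intersecting a hypersimplex with an identical copy must yield the entire participant tuple together with its boundary tags and any anti-vertices, not a truncated relation; because the two tuples coincide exactly this holds, but it must be read off the meet table (and checked against C4 for tag preservation) rather than assumed from naive set-theoretic intersection.
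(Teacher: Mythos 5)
Your proposal is correct and takes essentially the same approach as the paper, whose proof is simply that identical elements are ignored by merge/meet so nothing new is created; you expand this into the explicit \textsc{Insert}-by-\textsc{Insert} induction for $\sqcup$ and the full-tuple Overlap case for $\sqcap$, which is a faithful elaboration rather than a different route.
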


\begin{proof}
	Identical elements are ignored by merge/meet; nothing new is created.
\end{proof}

\noindent\emph{Gloss.} Re-applying an operator to the same input does nothing.

\begin{lemma}[Monotonicity (Restricted)]
	(a) If $S\subseteq S'$, then $(H\ominus S')\sqsubseteq(H\ominus S)$.  
	(b) If $b$ refines $b'$, then $\pi_b(H)\sqsubseteq\pi_{b'}(H)$.  
	(c) $\sqcup$, $\sqcap$, and $/$ are not monotone.
\end{lemma}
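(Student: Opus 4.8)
The plan is to treat the three clauses separately, proving (a) and (b) directly from the operator definitions and deriving (c) by counterexample. Throughout I take $\sqsubseteq$ to compare the \emph{surviving positive structure} of two hypernetworks (their vertices and hypersimplices together with role-aligned participation), so that anti-vertices, which by A2 and C3 only mark exclusion and do not contribute positive content, are treated as bookkeeping rather than as structure that must be matched. For (a) I would unfold prune into its three documented stages: (i) replace every occurrence of each $s\in S$ by its anti-vertex; (ii) delete any hypersimplex that thereby becomes invalid under its relation $R$ or fully pruned; and (iii) remove any element left unreachable. I would then show each stage is \emph{antitone} in the deletion set. Since $S\subseteq S'$, stage (i) under $S'$ converts a superset of participants to anti-vertices, so the positive participants surviving under $S'$ form a subset of those surviving under $S$; stage (ii) deletes at least as many hypersimplices under $S'$, because a hypersimplex invalidated or emptied by pruning $S$ is \emph{a fortiori} invalidated by pruning the larger $S'$, while fewer anti-vertices can never invalidate a survivor; and stage (iii) only enlarges the unreachable set once more hypersimplices have gone. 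Composing the three inclusions gives that every surviving hypersimplex and reachable element of $H\ominus S'$ occurs, with at least as much positive content, in $H\ominus S$, i.e.\ $(H\ominus S')\sqsubseteq(H\ominus S)$.

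For (b) I would first fix the meaning of \emph{refinement}: $b$ refines $b'$ when the scope of $b$ (the elements carrying $b$, closed under percolation) is contained in the scope of $b'$. Projection $\pi_b$ selects exactly the scope of $b$, and by the \emph{Closure under Boundary Projection} lemma the selection is itself a valid hypernetwork inheriting identity, typing, and binding from $H$. Scope containment then gives that every element selected by $\pi_b$ is selected by $\pi_{b'}$, and since both projections preserve the same inherited structure, $\pi_b(H)\sqsubseteq\pi_{b'}(H)$. The only point requiring care is that refinement be \emph{percolation-compatible}, so that containment of scopes survives the propagation of percolating tags to descendants; this is guaranteed by the definition of refinement adopted in the companion boundary calculus, which I would cite rather than re-derive.

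For (c) I would exhibit counterexamples showing that none of $\sqcup,\sqcap,/$ is order-preserving in all arguments. Difference is immediate: the main text already establishes that $/$ is \emph{antitone} in its second argument, and taking $H_1=\{\textit{Car},\textit{Van};R_{\text{isA}}\}$ with $\emptyset\sqsubseteq\{\textit{Van};R_{\text{isA}}\}$ gives $H_1/\emptyset=H_1\not\sqsubseteq\{\textit{Car};R_{\text{isA}}\}=H_1/\{\textit{Van};R_{\text{isA}}\}$, so the order is reversed. For merge I would use an aggregation-type clash forbidden by C2: with $H_2=\{\varsigma^\alpha_X\}$ and $\emptyset\sqsubseteq\{\varsigma^\beta_X\}$, the merge $\emptyset\sqcup H_2$ retains $\varsigma^\alpha_X$, whereas $\{\varsigma^\beta_X\}\sqcup H_2$ must reject the conflicting $\alpha$-typed element, so the larger input yields a result not containing $H_1\sqcup H_2$. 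The delicate case, and what I expect to be the main obstacle, is meet: because its set-intersection core is naturally order-preserving, a counterexample cannot rest on plain intersection but must exploit HT-specific behaviour---either the aggregation-type and $R$-compatibility constraints on the \emph{constructed} intersected hypersimplices (Sub-hypernetwork Results lemma), or the ordering-sensitivity that makes $\sqcap$ depend on traversal order. Constructing a minimal, unambiguous instance in which enlarging one operand changes which participants align under $R$, and thereby \emph{removes} rather than adds common constructed structure, is where the work lies; everything else in the lemma follows from the antitone accounting above.
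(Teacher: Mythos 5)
Your parts (a) and (b) are correct and are essentially expanded versions of the paper's own (one-line) justifications: the paper proves (a) by ``removing more elements yields fewer survivors'' and (b) by ``a narrower boundary yields a smaller projection,'' and your stage-by-stage antitonicity argument for prune and your scope-containment argument for refinement are faithful elaborations of exactly that. Your difference and merge counterexamples for (c) also match the paper's stated reasons (``merge may introduce conflicts''), and the merge example via a \textsc{SameName} type clash is sound given the conflict-replacement behaviour of \textsc{Insert}.

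The gap is the meet case of (c), which you explicitly leave unresolved, and the reason you find it hard is a definitional mismatch you introduced at the outset. The paper's sub-hypernetwork relation (Table B.7) requires each element of $H'$ to appear \emph{identically} in $H$ --- same $R$, type, roles, and tags --- not merely with ``at least as much positive content.'' Under that strict reading the meet counterexample is immediate and needs no ordering-sensitivity: take $H_1=\{\textit{Car},\textit{Van};R_{\text{isA}}\}$, $H_2=\{\textit{Car};R_{\text{isA}}\}\sqsubseteq H_2'=\{\textit{Car},\textit{Van};R_{\text{isA}}\}$; then $H_1\sqcap H_2$ constructs $\{\textit{Car};R_{\text{isA}}\}$, which is not identical to the sole hypersimplex $\{\textit{Car},\textit{Van};R_{\text{isA}}\}$ of $H_1\sqcap H_2'$, so $H_1\sqcap H_2\not\sqsubseteq H_1\sqcap H_2'$. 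This is precisely what the paper's gloss ``meet/difference may construct new hypersimplices'' points at, via the Sub-hypernetwork Results lemma. Your looser, content-based reading of $\sqsubseteq$ was adopted to make (a) survive the anti-vertex rewriting that prune performs --- a real issue the paper does not address, since under the strict reading $\langle\tilde A,B;R\rangle$ is not identical to $\langle A,B;R\rangle$ and (a) would fail --- but that same looser reading is what blocks the easy meet counterexample. To close the proof you must fix one reading of $\sqsubseteq$ and use it consistently: either adopt the paper's strict identity reading (making meet trivial) and add an explicit clause that anti-vertex substitution is ignored when comparing pruned hypersimplices, or keep your content-based reading and then actually construct the meet counterexample from role-alignment or type constraints, which you have only conjectured.
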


\begin{proof}
	(a) Removing more elements yields fewer survivors.  
	(b) A narrower boundary yields a smaller projection.  
	(c) merge may introduce conflicts; meet/difference may construct new hypersimplices.
\end{proof}

\noindent\emph{Gloss.} prune and split are monotone; merge/meet/difference are not.

\begin{lemma}[Identity Elements]
	For $\emptyset$: $H\sqcup\emptyset=H$, $H\sqcap\emptyset=\emptyset$, $H/\emptyset=H$, $H/H=\emptyset$, and $H\ominus\emptyset=H$. For split, $\pi_{\top}(H)=H$.
\end{lemma}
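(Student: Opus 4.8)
The plan is to treat each of the six identities as a degenerate boundary case of the corresponding operator, discharging it directly from that operator's decision table together with the already-established \emph{Closure of All Operators} lemma, which guarantees that whatever the operators return is a well-formed hypernetwork. The only preliminary observation needed is that $\emptyset$ is itself a valid hypernetwork under A1--A5: it is the empty collection, which vacuously satisfies uniqueness, typing, relation binding, and boundary scoping. Hence each of the stated equations is an equation between valid hypernetworks, and it suffices to check that the operator's rules act trivially (returning $H$) or collapse completely (returning $\emptyset$).

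First I would dispatch the four cases that reduce to ``nothing to process''. For $H \sqcup \emptyset$, merge is defined as successive insertion of each hypersimplex of the second operand into the first; with $\emptyset$ there are zero insertion steps, so the \emph{Single-step Merge Preservation} lemma applies vacuously and the result is $H$ unchanged. For $H / \emptyset$, difference removes from $H$ exactly the elements matched in the second operand; with no such elements, nothing is removed and the result is $H$. For $H \ominus \emptyset$, prune rewrites each member of its deletion set to an anti-vertex; the empty deletion set triggers no rewrites, no cascade of deletions under C5, and no reachability pruning, so $H$ is returned verbatim. For $\pi_{\top}(H)$, the universal criterion $\top$ marks every element of $H$ as selected; since projection preserves identity and tags (A5, C4) and, by the \emph{Boundary Projection Validity} lemma, yields a sub-hypernetwork $\isSub H$, totality of the selection forces this inclusion to be an equality, so $\pi_{\top}(H)=H$.

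Next I would handle the two collapsing cases. For $H \sqcap \emptyset$, meet retains only the participants and hypersimplices common to both operands; since $\emptyset$ contributes none, no element survives and the result is $\emptyset$. For $H / H$, identity-matching under A1 pairs each vertex, simplex, and hypersimplex of $H$ with its exact counterpart in the second operand, so every element is slated for removal.

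The main obstacle I expect is precisely the self-difference case $H / H = \emptyset$. Because difference may \emph{construct} subtracted hypersimplices (as noted in the \emph{Sub-hypernetwork Results} lemma) and because C5 forces removal of any hypersimplex that becomes invalid once a participant is deleted, I must verify that the subtraction genuinely terminates at the empty hypernetwork rather than leaving residual or reconstructed structure. The argument I would give is that identity-matching removes every vertex and every hypersimplex, and the bidirectional containment maintained by \textsc{Insert} ensures that deleting the hypersimplices also clears their reciprocal \texttt{partOf} links, so no dangling references remain; the reachability pruning of C5 then removes anything orphaned. Since removal is total, there is no surviving participant from which difference could construct a residual intersected hypersimplex, so the cascade bottoms out at $\emptyset$, which the \emph{Closure of All Operators} lemma confirms is valid.
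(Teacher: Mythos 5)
Your proposal is correct and follows essentially the same route as the paper, whose entire proof is the one-line observation that the decision tables exhibit the identity behaviours directly; you simply make explicit, case by case, what that appeal to the tables amounts to. The extra care you take over $H/H=\emptyset$ (ruling out residual constructed hypersimplices and dangling \texttt{partOf} links) is a sound elaboration of a point the paper leaves implicit, not a different argument.
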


\begin{proof}
	Decision tables show identity behaviours directly.
\end{proof}

\noindent\emph{Gloss.} Empty inputs leave the hypernetwork unchanged.

\begin{lemma}[Non-commutativity and Non-associativity]
	In general, $H_1\sqcup H_2\neq H_2\sqcup H_1$ and $(H_1\sqcup H_2)\sqcup H_3\neq H_1\sqcup(H_2\sqcup H_3)$, with analogous results for other operators.
\end{lemma}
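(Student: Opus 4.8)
The statement is existential in force: \enquote{in general} asserts only that the stated equalities can fail, so the natural route is proof by explicit counterexample---one witness per operator and per algebraic law. The plan is to construct small valid hypernetworks, apply the relevant operator under each order or grouping, and then invoke the closure result already established (Lemma on Closure of All Operators) to confirm that both outputs are valid hypernetworks while differing as hypernetworks. Determinism (proved earlier) guarantees that each fixed order yields a single outcome, so the witnesses are well defined; the task reduces to showing the two outcomes are unequal.

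For difference the witness is immediate and purely set-theoretic. Taking
\[
H_1 = \{A, B ; R\}, \qquad H_2 = \{A ; R\},
\]
the decision table gives $H_1 / H_2 = \{B ; R\}$ whereas $H_2 / H_1 = \emptyset$, since every participant of $H_2$ already occurs in $H_1$. As $\{B ; R\} \neq \emptyset$, difference is non-commutative, and iterating the construction across a third operand supplies the non-associative witness.

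For merge the asymmetry is not set-theoretic but is induced by the order-sensitive resolution of name clashes established in the Determinism discussion and by percolation of boundary tags. The plan is to exhibit a configuration in which one processing order triggers the \Replace{} (name-lift) branch of the merge decision table while the reversed order triggers the \Ignore{} branch, or in which a percolating boundary attaches to a child hypersimplex under one order but not the other. Concretely, I would take a parent hypersimplex carrying a percolating boundary $b$ together with an initially untagged child, using the rule that percolation assigns tags at the moment a hypersimplex is introduced or rewritten, and compare the two insertion orders. Embedding such a pair inside a triple then distinguishes $(H_1 \sqcup H_2) \sqcup H_3$ from $H_1 \sqcup (H_2 \sqcup H_3)$, giving the non-associative witness; the construction for $\sqcap$ follows by the same pattern, and any ordering-sensitive composition of $\ominus$ or $\pi$ is handled analogously.

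The main obstacle is not finding order-sensitive \emph{intermediate} states but exhibiting witnesses whose \emph{final outputs} genuinely differ. Many natural name-lift pairs reconverge: merging a bare vertex with a like-named hypersimplex yields the same hypernetwork in either order, because \Replace{} in one direction and \Ignore{} in the other both leave exactly the hypersimplex and its participants. The care therefore lies in choosing configurations where the order-dependent branch leaves a lasting structural trace---through differing boundary-tag assignment under percolation, or through a partial clash that is only partly absorbed---and then verifying against the decision tables that each output is valid (by the earlier closure lemma) yet distinct. Once a single such witness is fixed for each operator, the \enquote{in general} claim is discharged.
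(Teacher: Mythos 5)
Your proposal is correct in strategy and is in fact considerably more careful than the paper's own proof, which consists of two sentences --- \enquote{Execution order affects unification and propagation. Operators are deterministic for a fixed order but not invariant under permutation} --- and supplies no witness at all. Both arguments rest on the same underlying idea (order-sensitivity of the decision tables), but you discharge the existential claim the right way, by explicit counterexample. Your difference witness is solid and mirrors the paper's own worked example ($\{Car, Van ; R_{\mathrm{isA}}\} / \{Car ; R_{\mathrm{isA}}\} = \{Van ; R_{\mathrm{isA}}\}$ versus the reverse yielding $\emptyset$), and your observation that the paper's canonical name-lift pair \emph{reconverges} --- inserting a bare vertex then a like-named hypersimplex, or vice versa, leaves exactly the same final hypernetwork via \textsc{Replace} in one direction and \textsc{Ignore} in the other --- is a genuine and valuable correction: the Determinism paragraph of Section~7 asserts that these two orders \enquote{produce two different but equally valid hypernetworks}, which is false for final outputs, so that example cannot serve as the non-commutativity witness.

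The one residual gap is that your merge and meet witnesses remain sketches: you describe the \emph{kind} of configuration that would work (percolating boundary tags assigned at introduction time, or partial clashes only partly absorbed) but do not fix a concrete pair and trace it through the decision tables to two verifiably distinct, valid outputs. Since you yourself identify reconvergence as the main hazard, the proof is not complete until at least one merge witness survives that check --- for instance, a \SameName{} clash whose \textsc{ConflictHS} resolution records the two clashing elements in an order-dependent way, or a percolating tag that attaches to a child only when the parent is already present. To be fair, the paper's proof does not construct such a witness either; but your argument sets a higher standard for itself and should meet it for every operator it claims to cover, not only for difference.
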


\begin{proof}
	Execution order affects unification and propagation. Operators are deterministic for a fixed order but not invariant under permutation.
\end{proof}

\noindent\emph{Gloss.} Order matters but results remain valid.

\bigskip
\noindent\textbf{Meta-summary.} These lemmas show that hypernetworks are robust under construction, projection, merging, pruning, meeting, differencing, and splitting. As long as axioms~A1--A5 and conditions~C1--C7 hold, no valid hypernetwork can be made invalid. This provides the mathematical basis for the mechanisation and reliability of the structural operator algebra.

\section*{Appendix B: Operator Decision Tables and Algorithms}
\addcontentsline{toc}{section}{Appendix B: Operator Decision Tables and Algorithms}

This appendix contains the complete decision tables, auxiliary predicates, and procedural realisations of the structural operators of Hypernetwork Theory (HT). The tables serve as executable specifications: each enumerates admissible cases and their outcomes at element level, while the algorithms act on whole hypernetworks. All operators respect axioms~A1--A5 and conditions~C1--C7, preserving identity, typing, relation binding, boundary information, and modeller-supplied exclusion. Consistent with A2, no operator ever generates anti-vertices except when explicitly required by the decision tables.

\subsection*{B.1 Auxiliary Predicates}

The following predicates are used throughout the operator specifications.
\begin{itemize}[leftmargin=*]
	\item $\EqV(v,v')$ --- vertices are identical by name.
	\item $\EqHs(\varsigma,\varsigma')$ --- hypersimplices are identical in $R$, type, role order, participants, and boundary tags.
	\item $\text{roles\_compatible}(\varsigma,\varsigma')$ --- arities coincide and roles can be aligned.
	\item $\text{nonempty\_inter}(\varsigma_1,\varsigma_2)$ --- participant intersection is non-empty.
	\item $\text{empty\_inter}(\varsigma_1,\varsigma_2)$ --- participant intersection is empty.
	\item $\text{orphan}(v)$, $\text{orphan}(\tilde v)$ --- vertex or anti-vertex has no incident hypersimplices.
	\item $\text{deleted}(p)$ --- participant $p$ has been removed in the current pruning sweep.
	\item $\text{wellformed}(\varsigma)$ --- hypersimplex satisfies A3--A5 (typing, relation binding, boundary consistency).
	\item $b(e)$ --- element $e$ carries boundary tag $b$.
	\item $\text{desc}(b)$ --- descendants reached by a percolating boundary.
	\item $\text{parts}(\varsigma)$ --- participants added to closure during split.
	\item $\text{identical\_in\_}H(e)$ --- element $e$ appears in $H$ with identical $R$, type, roles, and boundary tags.
\end{itemize}

\subsection*{B.2 Merge ($\sqcup$)}

\begin{table}[H]
	\centering
	\caption{Decision table for merge (candidate $e \in H_2$ relative to $H_1$).}
	\label{tab:merge-precedence}
	\begin{tabular}{p{5.1cm}p{7.2cm}p{2cm}}
		\toprule
		\textbf{Condition} & \textbf{Notes} & \textbf{Outcome} \\
		\midrule
		$e = \tilde v$ & Explicit modeller-supplied exclusion (A2) & Insert \\
		$e=v \wedge \exists v':\EqV(v,v')$ & Duplicate vertex; idempotent & Ignore \\
		$e=v \wedge \exists v':\neg\EqV(v,v')$ & Name clash (A1) & Conflict \\
		$e=\varsigma \wedge \exists \varsigma':\EqHs(\varsigma,\varsigma')$ & Identical hypersimplex & Ignore \\
		$e=\varsigma \wedge \exists \varsigma':\neg\EqHs(\varsigma,\varsigma')$ & SameName conflict & Conflict \\
		$\text{type}(\varsigma)\neq\text{type}(\varsigma')$ & Aggregation type mismatch (A3) & Conflict \\
		$R(\varsigma)\neq R(\varsigma')$ & Relation symbol mismatch (A4) & Conflict \\
		$\neg\text{roles\_compatible}(\varsigma,\varsigma')$ & Arity/role mismatch & Conflict \\
		$e=v \wedge v\notin H_1$ & New vertex & Insert \\
		$e=\varsigma \wedge \varsigma\notin H_1$ & New hypersimplex; preserve tags & Insert \\
		$\text{roles\_compatible}(\varsigma,\varsigma')$ & Compatible union & Unify \\
		\bottomrule
	\end{tabular}
\end{table}

\subsection*{B.3 Meet ($\sqcap$)}

\begin{table}[H]
	\centering
	\caption{Decision table for meet (pairs $e_1\in H_1$, $e_2\in H_2$).}
	\label{tab:meet-precedence}
	\begin{tabular}{>{\raggedright\arraybackslash}p{5.1cm}p{7.2cm}p{2cm}}
		\toprule
		\textbf{Condition} & \textbf{Notes} & \textbf{Outcome} \\
		\midrule
		$v_1 = v_2$ & Identical vertices & Retain \\
		$v_1 \neq v_2$ & Different identifiers & Drop \\
		$\tilde v_1 = \tilde v_2$ & Matching exclusion markers & Retain \\
		$\tilde v_1 \neq \tilde v_2$ & Non-matching exclusion markers & Drop \\
		$\neg\EqHs(\varsigma_1,\varsigma_2)$ & No SameName candidate & Null \\
		$\text{type}, R, \text{roles match} \wedge \text{nonempty\_inter}$ & Align and retain overlap & Overlap \\
		$\text{type}=\alpha \wedge \text{empty\_inter}$ & Conjunction empty & Null \\
		$\text{type}=\beta \wedge \text{empty\_inter}$ & Alternatives disjoint & Null \\
		\bottomrule
	\end{tabular}
\end{table}

\subsection*{B.4 Difference ($/$)}

\begin{table}[H]
	\centering
	\caption{Decision table for difference ($e \in H_1$ relative to $H_2$).}
	\label{tab:difference-precedence}
	\begin{tabular}{p{5.1cm}p{7cm}p{2.2cm}}
		\toprule
		\textbf{Condition} & \textbf{Notes} & \textbf{Outcome} \\
		\midrule
		$v\notin H_2$ & Vertex absent from $H_2$ & Retain \\
		$v\in H_2$ & Duplicate vertex & Drop \\
		$\tilde v\notin H_2$ & Exclusion marker absent from $H_2$ & Retain \\
		$\tilde v\in H_2$ & Duplicate exclusion marker & Drop \\
		$\neg\exists\varsigma':\EqHs(\varsigma,\varsigma')$ & No SameName match & Retain \\
		$\exists\varsigma':\text{incompatible}$ & Incomparable hypersimplex & Retain \\
		$\exists\varsigma':\text{compatible}$ & Rolewise subtraction & Subtract/Drop \\
		\bottomrule
	\end{tabular}
\end{table}

\subsection*{B.5 Prune ($\ominus$)}

\begin{table}[H]
	\centering
	\caption{Decision table for prune (fixpoint sweep).}
	\label{tab:prune-precedence}
	\begin{tabular}{p{5.1cm}p{7.2cm}p{2cm}}
		\toprule
		\textbf{Condition} & \textbf{Notes} & \textbf{Outcome} \\
		\midrule
		$v \in S$ & Vertex selected for pruning & Delete \\
		$\tilde v \in S$ & Anti-vertex selected for pruning & Delete \\
		$\text{orphan}(v)$ & No incident hypersimplices & Delete \\
		$\text{orphan}(\tilde v)$ & No incident hypersimplices & Delete \\
		$\varsigma \in S$ & Hypersimplex selected for pruning & Delete \\
		$\text{orphan}(\varsigma)$ & No supporting structure & Delete \\
		$\exists p:\text{deleted}(p)$ & Dangling role after deletions & Delete \\
		$R(\varsigma)\in S$ & Prune all instances of relation $R$ & Delete all \\
		$b(e)\in S$ & Prune all elements carrying boundary $b$ & Delete all \\
		$\neg\text{wellformed}(\varsigma)$ & Violates A3--A5 & Delete \\
		survivors keep boundary tags & Tags preserved on all kept elements & Keep \\
		\bottomrule
	\end{tabular}
\end{table}

\subsection*{B.6 Split ($\pi$)}

\begin{table}[H]
	\centering
	\caption{Decision table for split (boundary- or vertex-based projection).}
	\label{tab:split-precedence}
	\begin{tabular}{p{5.1cm}p{7.2cm}p{2cm}}
		\toprule
		\textbf{Condition} & \textbf{Notes} & \textbf{Outcome} \\
		\midrule
		$b(e)$ and non-percolating & Tagged element only & Retain \\
		$e\in\text{desc}(b)$ & Percolated to descendants & Retain \\
		$v\in S$ & Seed vertex for closure & Retain $v$ \\
		$\varsigma\cap S\neq\emptyset$ & Closure under roles and typing & Retain; expand $S$ \\
		$\varsigma\cap S=\emptyset\wedge\neg b(e)$ & Outside scope & Drop \\
		retained element & Preserve tags & Keep \\
		\bottomrule
	\end{tabular}
\end{table}

\subsection*{B.7 Sub-hypernetwork Relation ($\sqsubseteq$)}

\begin{table}[H]
	\centering
	\caption{Decision table for sub-hypernetwork relation.}
	\label{tab:subhn-precedence}
	\begin{tabular}{p{5.1cm}p{7.2cm}p{2cm}}
		\toprule
		\textbf{Condition} & \textbf{Notes} & \textbf{Outcome} \\
		\midrule
		$v\notin H$ or $\tilde v\notin H$ & Missing atomic element & Fail \\
		$\varsigma\notin H$ & Missing hypersimplex & Fail \\
		$\text{identical\_in\_}H(e)$ for all $e\in H'$ & Same $R$, type, roles, tags & True \\
		\bottomrule
	\end{tabular}
\end{table}

\subsection*{B.8 Operator Algorithms}

This subsection presents the procedural realisations of the structural operators defined in Section~\ref{sec:structure-and-composition}. Each operator is executed by iterating through elements in a fixed order, applying the relevant decision table, and enforcing axioms~A1--A5 and conditions~C1--C7 after each step. This ensures that atomic identifiers exist before they participate in aggregations, that typing is fixed before role assignment, and that operator execution is deterministic: for any given ordered input, the same output is always produced.

\medskip
\paragraph{How to read this subsection.}
Each algorithm realises the semantics of a structural operator procedurally. The pseudocode follows the decision tables in Section~\ref{sec:structure-and-composition}, with comments and glosses explaining the intuition. At the core is the \textsc{Insert} routine, which embodies the merge decision table procedurally. Construction is repeated \textsc{Insert} into an empty hypernetwork, while merge applies \textsc{Insert} across two hypernetworks. All other operators follow the same skeleton but substitute their own decision logic for \textsc{InsertOrUnify}. In this sense, \textsc{Insert} is the universal primitive of structural composition.
\\

\begin{algorithm}[H]
	\caption{\textsc{Insert}$(H,e)$ \newline \emph{Gloss: Insert adds $e$ if new, ignores if identical, and replaces name-clash mismatches with a \SameName conflict hypersimplex; closure and \texttt{partOf} links are preserved.}}
	\KwPre{$H$ valid under A1--A5}
	\KwPost{$H'$ valid under A1--A5, C1--C7}
	\If{$e$ is a vertex}{
		\If{a vertex $v'$ with the same identifier exists in $H$}{
			\If{$\EqV(e,v')$}{\Return $H$ \tcp*{identical; \textsc{Ignore}}}
			\Else{replace $v'$ by \textsc{ConflictHS}(\SameName,;$v'$, $e$) \tcp*{A1: resolve by conflict replacement}}
		}
		\Else{insert $e$}
	}
	\If{$e$ is an anti-vertex}{
		insert $e$ \tcp*{A2: explicit exclusion elements; may appear from the modeller or from operators, but are never created implicitly here}
	}
	\If{$e$ is a hypersimplex $\varsigma$}{
		\textbf{Step 0 (name clash handling):}
		\If{a hypersimplex $\varsigma'$ with the same identifier exists}{
			\If{$\EqHs(\varsigma,\varsigma')$}{\Return $H$ \tcp*{identical; \textsc{Ignore}}}
			\Else{replace $\varsigma'$ by \textsc{ConflictHS}(\SameName,;$\varsigma'$, $\varsigma$); \Return $H$}
		}
		\If{a bare vertex $v'$ with the same identifier exists}{
			\If{typing/binding checks (A3--A5) succeed}{replace $v'$ by $\varsigma$ \tcp*{promotion}}
			\Else{replace $v'$ by \textsc{ConflictHS}(\SameName,;$v'$, $\varsigma$); \Return $H$}
		}
		\textbf{Step 1 (well-formedness):} verify all participants exist or are anti-vertices (A1--A2);\\
		\textbf{Step 2 (typing):} check $\alpha/\beta$ declared and immutable (A3);\\
		\textbf{Step 3 (relation/roles/boundary):} check $R$ symbol, role order/arity, and boundary scope (A4--A5);\\
		\If{all checks pass}{
			insert $\varsigma$ and establish reciprocal \texttt{partOf} links
		}
		\Else{replace (or create) by \textsc{ConflictHS}(\SameName,;failing\_element, $\varsigma$)}
	}
	enforce closure (A1--A5, C1--C7);\\
	\Return $H'$
\end{algorithm}

\bigskip
\begin{algorithm}[H]
	\caption{\textsc{Merge}$(H_1,H_2)$ \newline \emph{Gloss: Merge integrates all elements of $H_2$ into $H_1$ by repeated insertion; identicals are ignored, mismatches become \SameName conflicts.}}
	\Init{$H' \gets H_1$}
	\ForEach{$e \in H_2$ in sequence}{
		$H' \gets \textsc{Insert}(H',e)$;
	}
	\Return $H'$
\end{algorithm}

\bigskip
\begin{algorithm}[H]
	\caption{\textsc{Meet}$(H_1,H_2)$ \newline \emph{Gloss: Meet extracts the structure common to both hypernetworks, constructing $\varsigma^\ast$ where rolewise intersections are non-empty; $\bot$ means no element is produced.}}
	\Init{$H' \gets \emptyset$}
	\ForEach{$(e_1,e_2) \in H_1 \times H_2$ in sequence}{
		apply meet decision table (Table~\ref{tab:meet-precedence}) to $(e_1,e_2)$;\\
		\If{result is a vertex or anti-vertex}{add to $H'$}
		\ElseIf{result is a hypersimplex $\varsigma^\ast$}{add $\varsigma^\ast$ to $H'$}
		\ElseIf{result is $\bot$}{\textbf{continue}}
	}
	enforce closure (A1--A5, C1--C7);\\
	\Return $H'$
\end{algorithm}

\bigskip
\begin{algorithm}[H]
	\caption{\textsc{Difference}$(H_1,H_2)$ \newline \emph{Gloss: Difference keeps only those elements unique to $H_1$; for hypersimplices it performs rolewise subtraction and drops empties.}}
	\Init{$H' \gets \emptyset$}
	\ForEach{$e \in H_1$ in sequence}{
		apply difference decision table (Table~\ref{tab:difference-precedence}) to $e$ w.r.t.\ $H_2$;\\
		\If{result is a vertex or anti-vertex}{add to $H'$}
		\ElseIf{result is a hypersimplex $\varsigma^{-}$}{add $\varsigma^{-}$ to $H'$ if non-empty; otherwise skip}
	}
	enforce closure (A1--A5, C1--C7);\\
	\Return $H'$
\end{algorithm}

\bigskip
\begin{algorithm}[H]
	\caption{\textsc{Prune}$(H,S)$ \newline
		\emph{Gloss: Prune replaces selected participants by anti-vertices everywhere, then repeatedly deletes invalid or orphaned structure to a bounded fixpoint; reciprocal \texttt{partOf} links are updated.}}
	\Repeat{no further changes}{
		\ForEach{$\varsigma \in H$ such that $\varsigma$ is a hypersimplex}{
			\ForEach{participant $p$ of $\varsigma$}{
				\If{$p \in S$}{
					let $\tilde p$ be the anti-vertex for $p$ (create if needed);\\
					replace $p$ by $\tilde p$ in $\varsigma$;\\
					remove \texttt{partOf}$(p,\varsigma)$;\\
					add \texttt{partOf}$(\tilde p,\varsigma)$;
				}
			}
			apply prune decision table (Table~\ref{tab:prune-precedence}) to $\varsigma$;\\
			\If{$\varsigma$ deleted}{
				remove all \texttt{partOf} references to $\varsigma$;
			}
		}
		\ForEach{$e \in H$ such that $e$ is a vertex or anti-vertex}{
			apply prune decision table (Table~\ref{tab:prune-precedence}) to $e$;\\
			\If{$e$ deleted}{
				remove all \texttt{partOf} references to $e$;
			}
		}
		\ForEach{$\varsigma \in H$ such that $\varsigma$ is a hypersimplex}{
			\If{$\varsigma$ has no incoming \texttt{partOf} links and $\varsigma \notin S$}{
				delete $\varsigma$ from $H$;\\
				remove all \texttt{partOf} references to $\varsigma$;
			}
		}
		enforce closure (A1--A5, C1--C7);
	}
	\Return{$H$}
\end{algorithm}

\bigskip
\begin{algorithm}[H]
	\caption{\textsc{Split}$(H,C)$ \newline \emph{Gloss: Split projects a sub-hypernetwork by boundary tag or by closing over a chosen vertex set under $R$ and $\alpha/\beta$.}}
	\If{$C$ is a boundary $b$}{
		\Return $B(H,b)$ \tcp*{boundary projection equals $\pi_b(H)$}
	}
	\ElseIf{$C$ is a vertex seed set $S$}{
		initialise $S' \gets S$;\\
		\While{exists $\varsigma \in H$ touching $S'$ respecting $R$ and $\alpha/\beta$}{
			add $\varsigma$ and its participants to $S'$
		}
		\Return sub-hypernetwork induced by $S'$, preserving boundary tags
	}
\end{algorithm}

\bigskip
\noindent Together these tables and algorithms show how semantics-preserving composition is executed deterministically. \textsc{Insert} provides the universal skeleton; merge generalises construction; meet, difference, and prune specialise by substituting their own decision tables; and split realises projection. Collectively they demonstrate that the operator algebra of HT is both tractable and mechanisable: rigorous in principle and executable in practice.

\bibliographystyle{plain}
\begin{small}
	\bibliography{HypernetworkTheory.bib}
\end{small}
	
\end{document}